\newtheorem{theorem}{Theorem}[section]
\newtheorem*{theorem*}{Theorem}
\newtheorem{definition}[theorem]{Definition}
\newtheorem{proposition}[theorem]{Proposition}
\newtheorem{corollary}[theorem]{Corollary}
\newtheorem{lemma}[theorem]{Lemma}
\def\Sc{{\mathcal S}}
\def\bra{\langle}
\def\ket{\rangle}
\def\be{\begin{eqnarray}}
\def\ee{\end{eqnarray}}
\DeclareMathOperator*{\trace}{Tr}
\DeclareMathOperator{\im}{Im}
\DeclareMathOperator{\id}{id}
\begin{document}

\title{Quantum channels with polytopic images and image additivity}
\author{Motohisa Fukuda}
\address{MF, IN, MMW: Zentrum Mathematik, M5, Technische Universit\"at M\"unchen, Boltzmannstrasse 3, 85748 Garching, Germany}
\author{Ion Nechita}
\address{IN: CNRS, Laboratoire de Physique Th\'{e}orique, IRSAMC, Universit\'{e} de Toulouse, UPS, F-31062 Toulouse, France}
\email{m.fukuda@tum.de, nechita@irsamc.ups-tlse.fr, m.wolf@tum.de}

\author{Michael M. Wolf}

\begin{abstract}
We study quantum channels with respect to their image, i.e., the image of the set of density operators under the action of the channel. We first  characterize the set of quantum channels having polytopic images and show that additivity of the minimal output entropy can be violated in this class.
We then provide a complete characterization of quantum channels $T$ that are \emph{universally image additive} in the sense that for any quantum channel $S$, the image of $T \otimes S$ is the convex hull of the tensor product of the images of $T$ and $S$. These channels turn out to form a strict subset of entanglement breaking channels with polytopic images and a strict superset of classical-quantum channels. 
\end{abstract}

\maketitle

\tableofcontents

\section{Introduction}
\label{sec:intro}
This work is motivated by the study of additivity problems in quantum information theory. These problems typically arise whenever tensor products appear in optimization problems. The fact that quantum states on a tensor product space are not restricted to convex combinations of product states enables various manifestations of the whole being more (or occasionally less) than the sum of its parts. A paradigm of such an additivity problem concerns the minimum output entropy of a quantum channel. Looking at the problem, which is described in detail in Section \ref{sec:additivity}, one may realize that the details of the channel are only relevant if they influence the image of the channel and how it behaves under tensorization. However, images of quantum channels and their behaviour under taking tensor products are poorly understood. The present paper attempts to makes a first step in the direction of improving this.  

Our main results can be stated informally as follows (for a precise statement, see Thm.\ref{thm:polytopic-image} and Thm.\ref{thm:uia}).

\begin{theorem*}[Channels with polytopic images]
A quantum channel $T$ has polytopic image if and only if it can be decomposed as the sum of a classical-quantum channel $T_1$ and an arbitrary channel $T_2$, where $T_{1,2}$ act on orthogonal diagonal blocks of the input, and the image of $T_2$ is included in the image of $T_1$.
\end{theorem*}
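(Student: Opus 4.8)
The plan is to prove the two implications separately; the forward (structural) direction carries essentially all of the difficulty.

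\emph{Easy direction.} Suppose $T$ decomposes as stated, i.e. $T(\rho)=T_1(\Pi_1\rho\Pi_1)+T_2(\Pi_2\rho\Pi_2)$, where $\Pi_1,\Pi_2$ are the orthogonal projections onto the two input blocks, $T_1$ is classical-quantum with outputs $\{\sigma_i\}$, and $\operatorname{image}(T_2)\subseteq\operatorname{image}(T_1)=\operatorname{conv}\{\sigma_i\}$. For an arbitrary input $\rho$, write $p_i$ for the diagonal weights read off by $T_1$ and $q=\mathrm{Tr}(\Pi_2\rho\Pi_2)$, so that $\sum_i p_i+q=1$. Then $T(\rho)=\sum_i p_i\sigma_i+q\,T_2(\tilde\rho_2)$ with $T_2(\tilde\rho_2)\in\operatorname{conv}\{\sigma_i\}$, whence $T(\rho)\in\operatorname{conv}\{\sigma_i\}$. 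Since each $\sigma_i$ is itself attained, $\operatorname{image}(T)=\operatorname{conv}\{\sigma_i\}$ is a polytope.

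\emph{Setup for the converse.} Assume $\operatorname{image}(T)=P$ is a polytope with vertices $\sigma_1,\dots,\sigma_n$. Restricted to the state space $\mathcal D$, the map $T$ is an affine surjection onto $P$, so the preimage of each extreme point $\sigma_j$ is a face of $\mathcal D$; since the faces of $\mathcal D$ are exactly the sets of states supported on a subspace, there is a subspace $\mathcal K_j$ with $T^{-1}(\sigma_j)=\{\text{states on }\mathcal K_j\}$. In particular $T$ is constant equal to $\sigma_j$ on all of $\mathcal K_j$, and by polarization $T(X)=\mathrm{Tr}(X)\,\sigma_j$ for every operator $X\in\mathcal B(\mathcal K_j)$.

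\emph{The crux.} I claim the subspaces $\mathcal K_j$ are mutually orthogonal and that $T$ annihilates every coherence that leaves a block $\mathcal K_j$. The decisive geometric fact is that \emph{a nondegenerate ellipse contained in a polytope cannot pass through a vertex of that polytope}: near a vertex the polytope occupies only a wedge of opening angle $<\pi$, which contains no full line through the apex, whereas a smooth curve through the apex must locally follow its tangent line in \emph{both} directions. Consequently, for $|a\rangle\in\mathcal K_j$ and any $|b\rangle$, the image of the two-dimensional restriction $T_S$ to $S=\operatorname{span}\{|a\rangle,|b\rangle\}$ is a solid ellipsoid inside $P$ that contains the vertex $\sigma_j$, hence must be degenerate (a segment or a point). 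A qubit channel with segment image is classical-quantum, of the form $T_S(\cdot)=\langle m|\cdot|m\rangle\sigma_j+\langle m^\perp|\cdot|m^\perp\rangle\tau$; matching the endpoint $\sigma_j$ to its unique preimage forces $|m\rangle=|a\rangle$, so $T_S$ is diagonal in $\{|a\rangle,|a^\perp\rangle\}$ with $|a^\perp\rangle\in S$. Writing $|b\rangle=\langle a|b\rangle|a\rangle+\beta|a^\perp\rangle$ this gives $T(|a\rangle\langle b|)=\overline{\langle a|b\rangle}\,\sigma_j$ and $T(|b\rangle\langle b|)=|\langle a|b\rangle|^2\sigma_j+|\beta|^2\tau$ with $\tau\in P$. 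If moreover $|b\rangle\in\mathcal K_k$ with $k\ne j$, the second identity exhibits the extreme point $\sigma_k$ as a convex combination of $\sigma_j$ and $\tau$, which is impossible unless $\langle a|b\rangle=0$; this yields simultaneously $\mathcal K_j\perp\mathcal K_k$ and $T(|a\rangle\langle b|)=0$. Running the same slice argument with $|b\rangle$ replaced by any $|c\rangle\in\mathcal H_1^\perp$, where $\mathcal H_1:=\bigoplus_j\mathcal K_j$, gives $T(|a\rangle\langle c|)=0$ as well. I expect this geometric lemma, together with the verification that the relevant slices are genuinely degenerate, to be the main obstacle.

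\emph{Assembly.} Set $\mathcal H_1=\bigoplus_j\mathcal K_j$, which is an orthogonal sum by the above, and $\mathcal H_2=\mathcal H_1^\perp$; let $P_j$ be the projection onto $\mathcal K_j$ and $\Pi_1,\Pi_2$ the projections onto $\mathcal H_1,\mathcal H_2$. Choosing an orthonormal basis of each $\mathcal K_j$ and using that all within- and between-block coherences inside $\mathcal H_1$ are killed while every basis vector of $\mathcal K_j$ maps to $\sigma_j$, the restriction of $T$ to $\mathcal B(\mathcal H_1)$ is the classical-quantum channel $T_1(\cdot)=\sum_j\mathrm{Tr}(P_j\,\cdot)\,\sigma_j$, with image $\operatorname{conv}\{\sigma_j\}=P$. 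The restriction $T_2:=T|_{\mathcal B(\mathcal H_2)}$ is completely positive and, since $T$ is trace preserving, trace preserving, hence a channel, and $\operatorname{image}(T_2)\subseteq\operatorname{image}(T)=P=\operatorname{image}(T_1)$. Finally, because $T$ annihilates all coherences between $\mathcal H_1$ and $\mathcal H_2$, for arbitrary $\rho$ we obtain $T(\rho)=T_1(\Pi_1\rho\Pi_1)+T_2(\Pi_2\rho\Pi_2)$, which is exactly the asserted decomposition.
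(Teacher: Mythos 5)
Your proof is correct and follows essentially the same route as the paper: the heart in both cases is showing that the image of every two-dimensional slice through the preimage of a vertex is a degenerate ellipsoid, hence a classical-quantum qubit map diagonal in an orthonormal basis containing the preimage vector, from which the mutual orthogonality of the preimage subspaces and the block-diagonality of $T$ follow. Your only real departure is obtaining the subspaces $\mathcal K_j$ directly as faces of the state space (preimages of extreme points under an affine surjection), which neatly replaces the paper's inductive argument in Corollary \ref{cor:vecpace}.
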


\begin{theorem*}[Universally image additive channels]
A quantum channel $T$ is universally image additive if and only if it is \emph{essentially classical quantum}, i.e., if it is entanglement breaking with POVM operators having unit norm.
\end{theorem*}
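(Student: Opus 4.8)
Throughout, write $\mathcal D(\mathcal H)$ for the density operators on $\mathcal H$ and $\im T=\{T(\rho):\rho\in\mathcal D(\mathcal H)\}$ for the image. One inclusion is free: a product input $\rho\otimes\sigma$ is sent to $T(\rho)\otimes S(\sigma)$, and $\im(T\otimes S)$ is convex, so $\operatorname{conv}(\im T\otimes\im S)\subseteq\im(T\otimes S)$ always holds. Thus $T$ is universally image additive exactly when the reverse inclusion holds for every $S$, i.e.\ when every output $(T\otimes S)(\omega)$ produced by a (possibly entangled) input $\omega$ can be rewritten as a convex combination of product states with factors in $\im T$ and $\im S$. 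The plan is to prove the two implications separately.

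For the direction \emph{essentially classical quantum} $\Rightarrow$ \emph{universally image additive}, fix a representation $T(\rho)=\sum_k\trace(F_k\rho)R_k$ with $\|F_k\|=1$ for all $k$. Since $0\le F_k\le I$ and $\sum_jF_j=I$, the eigenvalue $1$ of $F_k$ is attained at a unit vector $v_k$ with $F_kv_k=v_k$ and $F_jv_k=0$ for $j\neq k$; hence $R_k=T(|v_k\rangle\langle v_k|)\in\im T$. This is the only point at which the unit-norm hypothesis enters. For any channel $S$ and any input $\omega$ a direct computation gives $(T\otimes S)(\omega)=\sum_k R_k\otimes S\!\big(\trace_A[(F_k\otimes I)\omega]\big)=\sum_k p_k\,R_k\otimes S(\sigma_k)$, where $p_k=\trace[F_k\,\trace_B\omega]$ is a probability vector and the $\sigma_k$ are states. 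Each term is a product of an element of $\im T$ with an element of $\im S$, so the output lies in $\operatorname{conv}(\im T\otimes\im S)$, which is universal image additivity.

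The substantive implication is the converse. Taking $S=\id$ already gives much: $\operatorname{conv}(\im T\otimes\mathcal D)$ consists of separable states, so universal image additivity forces $(T\otimes\id)(\omega)$ to be separable for every $\omega$, i.e.\ $T$ is entanglement breaking. I then pass to the dual picture. By Hahn--Banach, a failure $(T\otimes S)(\omega)\notin\operatorname{conv}(\im T\otimes\im S)$ is certified by a Hermitian $\mathcal W$ on the output spaces with $\trace[\mathcal W\,(A\otimes B)]\ge0$ for all $A\in\im T$, $B\in\im S$ but $\trace[\mathcal W\,(T\otimes S)(\omega)]<0$. Setting $M:=(T^*\otimes S^*)(\mathcal W)$ and using $T^*(X)=\sum_k\trace(R_kX)F_k$, the first condition says $M=\sum_k F_k\otimes\mathcal W_k$ is block positive (nonnegative on all product vectors) while the second says $M\not\ge0$. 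Since the potential witnesses $M$ for general $S$ lie in $\operatorname{range}(T^*)\otimes\operatorname{range}(S^*)$, and $S=\id$ already exhausts $\operatorname{range}(T^*)\otimes\mathcal B(\mathcal K)$, it suffices to treat $S=\id$: $T$ is universally image additive if and only if every block-positive operator $\sum_kF_k\otimes\mathcal W_k$ (with $\mathcal W_k$ Hermitian) is positive semidefinite.

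In this formulation the easy direction is immediate—if $\|F_k\|=1$ with triggering vectors $v_k$, block positivity evaluated on $v_k\otimes\chi$ yields $\langle\chi|\mathcal W_k|\chi\rangle\ge0$, so every $\mathcal W_k\ge0$ and thus $M\ge0$. The hard direction is to run this backwards: assuming $T$ admits no unit-norm representation, I must manufacture Hermitian $\mathcal W_k$ making $M=\sum_kF_k\otimes\mathcal W_k$ block positive yet indefinite. The driving mechanism is the \emph{leakage} created by the failure of the unit-norm condition: if no input triggers branch $k$ in isolation then $\langle v|F_k|v\rangle<1$ for every unit vector $v$, and this uniform slack is precisely the freedom needed to choose an \emph{indefinite} $\mathcal W_k$ that, after averaging against any achievable weight vector $(\trace(F_k\rho))_k$, stays dominated by the positive contributions of the remaining branches, so that block positivity survives while an entangled vector detects $M\not\ge0$. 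I expect this construction to be the main obstacle, for two reasons: the certificate must be genuinely entangled (the naive maximally entangled input, i.e.\ the Choi state, corresponds to one specific $\mathcal W$ that typically detects nothing), and ``no unit-norm representation'' is a statement about \emph{all} representations at once. To control the latter I would first show that universal image additivity forces $\im T$ to be a polytope and invoke the polytopic-image theorem (Thm.~\ref{thm:polytopic-image}): the vertices are extreme points and hence of the form $T(|\phi\rangle\langle\phi|)$, so the existence of a unit-norm measurement onto these finitely many vertices becomes a finite-dimensional feasibility problem, whose infeasibility in the non-essentially-classical-quantum case is exactly the block-positive, non-positive witness sought. Establishing that dichotomy rigorously is the crux of the proof.
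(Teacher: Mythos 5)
Your forward direction (essentially classical-quantum $\Rightarrow$ universally image additive) is correct and complete: the observation that $\|F_k\|=1$ together with $0\leq F_j$ and $\sum_jF_j=\mathbbm{1}$ forces $F_jv_k=\delta_{jk}v_k$, hence $R_k=T(v_kv_k^*)\in\mathrm{Im}(T)$, and the identity $(T\otimes S)(\omega)=\sum_kp_k\,R_k\otimes S(\sigma_k)$ gives exactly the required separable decomposition. This is a slightly more computational version of the paper's step, which instead factors $T=T\circ S$ through a classical-quantum channel $S$ and uses separability of $[S\otimes\mathrm{id}](\psi\psi^*)$. Your derivation that universal image additivity implies $T$ is entanglement breaking is also fine.

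The converse, however, is not proven. You reduce it to the dichotomy ``every block-positive $\sum_kF_k\otimes\mathcal W_k$ is positive semidefinite if and only if $T$ admits a unit-norm representation'' and you explicitly leave the hard half of that dichotomy as an unestablished crux. Two concrete problems with the plan you sketch. First, the intermediate claim that universal image additivity forces $\mathrm{Im}(T)$ to be a polytope is itself unproved; and even granting it, ``entanglement breaking with polytopic image'' is strictly weaker than ``essentially classical-quantum'' --- Propositions \ref{prop:disc-image} and \ref{prop:sphere-image} construct explicit counterexamples --- so the finite-dimensional feasibility problem you pose cannot be settled from those two properties alone. Your witness construction would have to exploit image additivity with the identity in some essential further way that the sketch does not identify. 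Second, the idea that actually closes the argument in the paper is entirely different and bypasses duality: apply image additivity with $\mathrm{id}$ to the maximally entangled state, note that the resulting separable state has maximally mixed marginal and is therefore the Choi matrix of an entanglement breaking channel $S$, whence $T=T\circ S$ by the Choi-Jamio{\l}kowski isomorphism; iterate to get $T=T\circ S_\infty$ with $S_\infty$ the Ces\`aro-mean projection onto the fixed points of $S$; and invoke Theorem \ref{thm:fixed-point-eb} (resting on Lemma \ref{lem:no-algebra}, that an entanglement breaking channel cannot preserve a matrix algebra of dimension $\geq 2$) to conclude that $S_\infty$, and hence $T=T\circ S_\infty$, is essentially classical-quantum. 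Without this, or a completed proof of your block-positivity dichotomy, the ``only if'' direction is missing.
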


The paper is organized as follows. Section \ref{sec:def-examples} fixes the notation, defines the basic objects and briefly discusses some of their elementary properties. In Section \ref{sec:polytopic-image} we then derive a characterization of quantum channels (or more general linear maps) with polytopic images. The core of the section is a careful analysis of how a vertex can appear in the image of a channel. Section \ref{sec:additivity} then shows that, despite their seemingly 'classical' image, quantum channels with polytopic images can still violate the additivity of the minimum output entropy. In Section \ref{sec:uia-channels} a stronger form of additivity, namely universal image additivity, is analyzed and the set of channels with this property is characterized. Section \ref{sec:fixed-points} provides a preparatory result for this analysis. Finally, in Section \ref{sec:polytopic-EB} we reinvestigate entanglement breaking channels with polytopic images.

\section{Definition and first examples}
\label{sec:def-examples}

We denote by $M_d(\mathbb C)$ the set of $d\times d$ matrices with complex entries and define $\Sc_d\subseteq M_d(\mathbb C)$ to be the subset of density matrices, i.e., positive semi-definite matrices with unit trace. Occasionally, \emph{pure quantum states} described by rank one density matrices will be characterized by a corresponding unit vector in $\mathbb{C}^d$.

A \emph{positive operator valued measure} (POVM) will in this paper be identified with a tuple $(M_1,\ldots,M_k)$ of positive semi-definite operators $M_i\in M_d(\mathbb{C})$ for which $\sum_{i=1}^k M_i=\mathbbm{1}$.
 
A linear map $T:M_d(\mathbb C) \to M_n(\mathbb C)$ will be called a \emph{quantum channel} (in the Schr\"odinger picture) if it is completely positive and trace preserving.
Its \emph{image} will be defined as 
$$\mathrm{Im}(T): = T(\Sc_d) =  \{T(\rho)\, :\, \rho \in\Sc_d\} \subset M_n(\mathbb C).$$
Note that $\Sc_d$ and therefore also $\mathrm{Im}(T)$ are compact convex sets and that every extreme point of $\mathrm{Im}(T)$ has a pure state in its preimage.

Every linear map $T:M_d(\mathbb C) \to M_n(\mathbb C)$ can be represented as 
\begin{equation}\label{eq:genform}
T(\rho) = \sum_{i=1}^k \mathrm{Tr}(M_i \rho) \sigma_i,
\end{equation} with suitable matrices $M_i\in M_d(\mathbb{C})$ and $\sigma_i\in M_n(\mathbb{C})$. Depending on the properties of these matrices we get different classes of quantum channels, which we summarize in the following definition:

\begin{definition}[Classes of entanglement breaking channels]
Let $T:M_d(\mathbb C) \to M_n(\mathbb C)$ be a linear map. 
\begin{enumerate}
\item $T$ is an \emph{entanglement breaking quantum channel} if it admits a representation as in Eq.(\ref{eq:genform}) where the $M_i$'s form a POVM and the $\sigma_i$'s are density matrices.
\item $T$ is called an \emph{essentially classical-quantum channel} if it admits a representation as in Eq.(\ref{eq:genform}) where the $M_i$'s form a POVM for which $\forall i: ||M_i||=1$ and the $\sigma_i$'s are density matrices.\footnote{Here $||\cdot||$ denotes the operator norm.}
\item $T$ is called a \emph{classical-quantum (CQ) channel} if there is an orthonormal basis $\{e_i\in\mathbb{C}^d\}$ such that $T$ admits a representation as in Eq.(\ref{eq:genform}) with $k=d$ and where $M_i=e_ie_i^*$ and $\sigma_i\in\Sc_n$ for all $i$.
\end{enumerate}
\end{definition}

The image of an essentially classical-quantum channel can be seen to be the convex hull of the points $\{\sigma_1, \ldots, \sigma_k\}$ 
$$\mathrm{Im}(T) = \operatorname{hull}(\sigma_1, \ldots, \sigma_k).$$ That is, in this case $\mathrm{Im}(T)$ is a convex polytope.
In order to see this, note that the POVM operators associated to such a channel can be decomposed as 
\begin{equation}\label{eq:Mi-tildeMi}
M_i = e_i e_i^* + \tilde M_i,
\end{equation}
where $\{e_1, \ldots, e_k\}$ is an orthonormal family of vectors in $\mathbb C^d$ spanning a $k$-dimensional subspace $V$ and $(\tilde M_1, \ldots \tilde M_k)$ form a POVM on the orthogonal subspace $V^\perp$. In particular, 
\begin{equation}\label{eq:orth-tildeMi-ej}
\forall i,j, \quad \mathrm{Tr}(\tilde M_i e_j e_j^*) = 0.
\end{equation}
By choosing $\rho = e_ie_i^*$ we now get $\sigma_i \in \mathrm{Im}(T)$ while the the inclusion $\mathrm{Im}(T) \subseteq \operatorname{hull}(\sigma_1, \ldots, \sigma_k)$ follows from the POVM condition. 

Note that for general entanglement breaking quantum channels, the image may not be a convex polytope. As an example, consider the following case where the POVM operators are up to rescaling projections on non-orthogonal vectors: let $T:M_2(\mathbb C) \to M_3(\mathbb C)$ be of the form
\begin{equation}\label{eq:non-commuting-POVM} 
T(\rho) = \sum_{j=1}^3 \operatorname{Tr}(M_j \rho) e_je_j^*,
\end{equation}
where $\{e_j\}_{j=1}^3$ is an orthonormal basis of $\mathbb C^3$. The POVM operators $M_j \in M_2(\mathbb C)$ are defined by 
$$M_j = \frac{2}{3} P_{\omega^j} ,$$
where $P_{\omega^j}$ is the orthogonal projection on the complex number $\omega^j \in \mathbb C$ (seen as a vector in $\mathbb R^2 \subset \mathbb C^2$) and $\omega$ is a third root of unity. An easy computation shows that the image of $T$ is the set of diagonal matrices with entries inside the filled circle in Figure \ref{fig:non-commuting-POVM}. 

\begin{figure}[htbp] 
\includegraphics{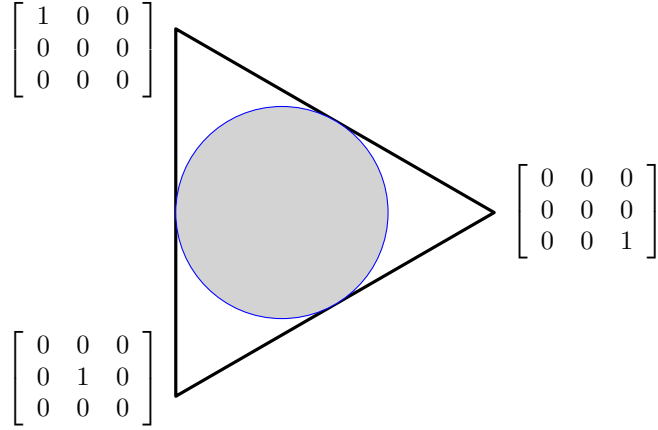}
\caption{The image of the entanglement breaking quantum channel $T$ from \eqref{eq:non-commuting-POVM} is not a polytope.} 
\label{fig:non-commuting-POVM}
\end{figure}

\section{Quantum channels with polytopic images}
\label{sec:polytopic-image}

In this section, we study the structure of quantum channels sending the convex body of quantum states to a convex polytope. Since the main applications of the results below are in quantum information theory, we state our results for quantum channels; note however that all the results in the current section remain valid in the case of linear maps between matrix algebras. 

We start by recalling the definition of a vertex of a convex set (see \cite[Definition 11.6.1]{ber}, and Figure \ref{fig:vertex} for a graphical example).

\begin{definition} Let $V$ be a finite dimensional real vector space and $C\subseteq V$ a closed convex set.
A point $x$ on the boundary of $C$ is called a \emph{vertex} if the intersection of all supporting hyperplanes of $C$ at $x$ is the set $\{x\}$. In particular, $x$ is then an extreme point of $C$.
\end{definition}

For convex polytopes, the sets of vertices and extreme points coincide.  A convex set like $\Sc_d$, on the other hand, has no vertices. If the image of a quantum channel has a vertex, then (since it is in particular an extreme point) there is always a pure input state that is mapped onto it.

\begin{figure}[htbp] 
\includegraphics{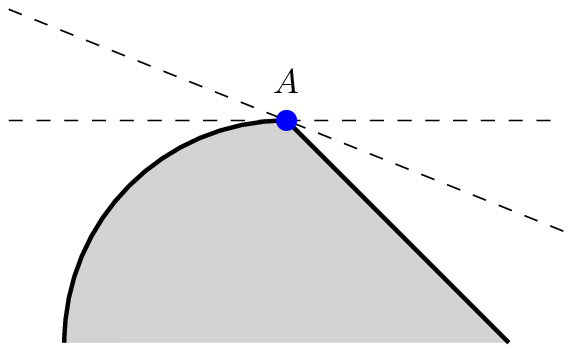} \quad\quad\quad\quad\quad
\includegraphics{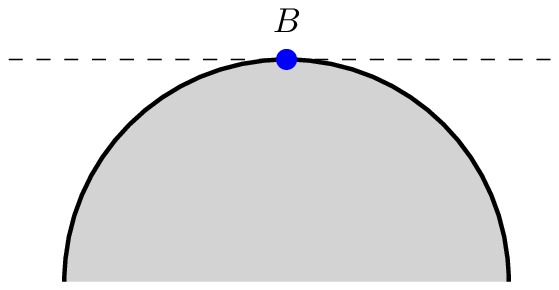}
\caption{On the left, the point $A$ is a vertex of the convex set, since the the number of independent supporting hyperplanes going through $A$ matches the dimension of the ambient vector space. On the right, there is a unique supporting hyperplane going through the point $B$, so $B$ is not a vertex.} 
\label{fig:vertex}
\end{figure}

\begin{lemma}\label{lem:elli}
Consider a quantum channel $T:M_d(\mathbb C) \to M_n(\mathbb C)$ and two different pure states $x,y \in \mathbb C^d$ such that $T(xx^*)$ is a vertex of $\mathrm{Im}(T)$. If $H$ is the two-dimensional subspace of $\mathbb C^d$ spanned by $x$ and $y$, then the restriction $T_H$ of $T$ to $\mathrm{End}(H)$ is of the form 
$$T_H(\rho) = \langle x, \rho x \rangle T(xx^*) + \langle {x}_\perp, \rho x_\perp \rangle T(x_\perp x_\perp^*),$$
where $x_\perp$ is orthogonal to $x$ in $H$. Moreover, if $T(yy^*)$ is also a vertex of $\mathrm{Im}(T)$, different from $T(xx^*)$, then $x \perp y$.
\end{lemma}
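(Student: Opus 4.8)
The plan is to restrict $T$ to the qubit algebra $\mathrm{End}(H)$ and combine the elementary fact that a linear map sends the Bloch ball to an ellipsoid with the rigidity forced by the vertex hypothesis. Fix the orthonormal basis $\{x,x_\perp\}$ of $H$ and introduce the Pauli operators $\sigma_0=\mathbbm{1}_H$, $\sigma_z = xx^*-x_\perp x_\perp^*$, $\sigma_x = xx_\perp^*+x_\perp x^*$, $\sigma_y = i(x_\perp x^*-xx_\perp^*)$. Writing a density matrix on $H$ as $\rho=\tfrac12(\sigma_0+\vec n\cdot\vec\sigma)$ with $|\vec n|\le1$ (pure iff $|\vec n|=1$), linearity gives
$$T_H(\rho)=\tfrac12\big(S_0+n_xS_x+n_yS_y+n_zS_z\big),\qquad S_\mu:=T(\sigma_\mu).$$
Since $A:=T(xx^*)=\tfrac12(S_0+S_z)$ and $B:=T(x_\perp x_\perp^*)=\tfrac12(S_0-S_z)$, the asserted formula is exactly equivalent to the two identities $S_x=S_y=0$, so the first statement reduces to proving these.

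The core is convex-geometric. The set $T_H(\Sc(H))$ is the affine image of the Bloch ball, hence a solid ellipsoid $E\subseteq\mathrm{Im}(T)$ whose dimension equals the rank of the real-linear map $\vec n\mapsto\tfrac12\sum_k n_kS_k$. Because $A$ is a vertex of $\mathrm{Im}(T)$ and $E\subseteq\mathrm{Im}(T)$ contains $A$, every supporting hyperplane of $\mathrm{Im}(T)$ at $A$ also supports $E$, so the intersection of all supporting hyperplanes of $E$ at $A$ is contained in that for $\mathrm{Im}(T)$, namely $\{A\}$; thus $A$ is a vertex of $E$ too. The main obstacle — and the place where the hypothesis ``vertex'' rather than merely ``extreme point'' is indispensable — is the claim that a solid ellipsoid of dimension $m\ge2$ has no vertices: at any boundary point its tangent hyperplane inside the affine hull is unique and $(m-1)$-dimensional, every supporting hyperplane in the ambient space must contain it, so the intersection of all supporting hyperplanes has dimension $\ge m-1\ge1$ and cannot be a single point. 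This forces $\dim E\le1$.

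Reading off $S_x=S_y=0$ from $\dim E\le1$ is then routine. If $\dim E=0$ the linear part vanishes and all $S_k=0$. If $\dim E=1$ there are $D\ne0$ and $\vec c\ne0$ with $\tfrac12\sum_k n_kS_k=(\vec c\cdot\vec n)\,D$, so $E=\{\tfrac12 S_0+tD:\ t\in[-|\vec c|,|\vec c|]\}$ is a segment whose endpoints are the images of $\pm\vec c/|\vec c|$, the unique maximiser and minimiser of $\vec c\cdot\vec n$ on the sphere. As $A$ is a vertex it must be an endpoint; but $A$ is the image of the north pole $\vec n=(0,0,1)$, whence $c_z=\pm|\vec c|$ and therefore $c_x=c_y=0$, i.e. $S_x=S_y=0$. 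This proves the first assertion.

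For the \emph{moreover}, observe that $A\ne B$: otherwise $E$ is a point and $T(yy^*)=A=T(xx^*)$, contradicting the hypothesis; hence $E=[A,B]$ is a genuine segment. Now $T(yy^*)\in E$ is a vertex of $\mathrm{Im}(T)$, hence an extreme point of $[A,B]$, hence an endpoint, and being different from $A$ it equals $B$. Substituting $\rho=yy^*$ into the formula already established yields $|\langle x,y\rangle|^2A+|\langle x_\perp,y\rangle|^2B=B$, and since $A\ne B$ this forces $|\langle x,y\rangle|^2=0$, i.e. $x\perp y$.
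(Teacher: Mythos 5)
Your proof is correct and follows essentially the same route as the paper's: restrict to the Bloch ball of $H$, observe that its image is an ellipsoid which must degenerate to a point or a segment because $T(xx^*)$ is a vertex, and then identify the endpoints with $T(xx^*)$ and $T(x_\perp x_\perp^*)$. The only differences are cosmetic --- you extract $S_x=S_y=0$ in Pauli coordinates where the paper uses the Riesz representation of the affine coordinate $f$ along the segment, and you spell out why an ellipsoid of dimension $\geq 2$ has no vertices, a step the paper leaves implicit.
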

\begin{proof}
Let $\Sc_H:=\Sc_d\cap\mathrm{End}(H)$ be the set of quantum states on $H$, which is isomorphic to the Bloch ball (see Appendix \ref{sec:Bloch-ball}). Using the linearity of the map $T$, it follows that $T(\Sc_H)$, as the image of a sphere under a linear map, is an ellipsoid $E$.

We shall now show that the ellipsoid $E$ is degenerate, being at most $1$-dimensional. Let $C$ be the intersection of the image of $T$ with the smallest affine space containing $E$. Obviously, we have $T(xx^*) \in E \subseteq C$. Moreover, $T(xx^*)$ is a vertex of $C$. Since $E$ is a subset of $C$ that contains $T(xx^*)$, the point $T(xx^*)$ must be a vertex of $E$ as well and thus $E$ must be either a point or a segment.

When the dimension of the ellipsoid is zero, the statement of the theorem is trivially true, so we focus on the case where the ellipsoid is a segment with $T(xx^*)$  being one of the end points. If $\sigma \neq T(xx^*)$ denotes the other endpoint of the segment, then the restriction $T_H$ can be written as
\be\label{eq:segment}
T_H(\rho) = f(\rho) T(xx^*) + (1- f(\rho)) \sigma
\ee
for any quantum state $\rho$ supported on $H$.  
Moreover, taking the inner product with some matrix orthogonal to $\sigma$ 
we notice that the function $f(\cdot)$ is linear. 
By the Riesz representation theorem, there exists a Hermitian matrix $M\in \mathrm{End}(H)$ with $0 \leq M \leq I$ and such that 
\be
f(\rho) = \trace [\rho M]
\ee
Using $\rho = xx^*$, we must have $xx^* \leq M$. Also, $f(\rho) = 0$ for some other input $\rho$ (in order to get the segment image), but then $\rho$ must  be $x_\perp x_\perp^*$, so that $M = xx^*$, and the proof of the first part of the statement is now complete. 

In the case where $T(yy^*)$ is also a vertex, the image must be the line segment $[T(xx^*), T(yy^*)]$ and thus $\sigma = T(yy^*)$ and $y=x_\perp$. 
\end{proof}

\begin{corollary}\label{cor:vecpace}
Let $T:M_d(\mathbb{C})\rightarrow M_n(\mathbb{C})$ be a quantum channel, $\sigma$ a vertex of $\mathrm{Im}(T)$ and $X_k:=\{x_1,\ldots,x_k\}$ a set of unit vectors in $\mathbb{C}^d$ such that $T(x_ix_i^*)=\sigma$ for all $i$. Then $T(xx^*)=\sigma$ for all unit vectors $x\in\mathrm{span}(X_k)$.
\end{corollary}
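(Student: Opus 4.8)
The plan is to reduce the statement to the two-vector case already governed by Lemma \ref{lem:elli}, and then to promote that two-vector conclusion to the full span by a routine subspace argument.

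First I would treat a pair of linearly independent vectors $x_i,x_j\in X_k$ and show that \emph{every} unit vector in $H:=\mathrm{span}(x_i,x_j)$ is sent to $\sigma$. Since $T(x_ix_i^*)=\sigma$ is a vertex of $\mathrm{Im}(T)$, Lemma \ref{lem:elli} applies and gives
$$T_H(\rho) = \langle x_i,\rho x_i\rangle\,\sigma + \langle (x_i)_\perp,\rho (x_i)_\perp\rangle\,T\big((x_i)_\perp (x_i)_\perp^*\big),$$
where $(x_i)_\perp$ is the unit vector in $H$ orthogonal to $x_i$. Writing $x_j=\alpha x_i+\beta (x_i)_\perp$ with $\beta\neq 0$ (by independence) and evaluating the identity on $\rho=x_jx_j^*$, the hypothesis $T(x_jx_j^*)=\sigma$ forces $|\beta|^2\big(T((x_i)_\perp (x_i)_\perp^*)-\sigma\big)=0$, hence $T((x_i)_\perp (x_i)_\perp^*)=\sigma$. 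Substituting this back collapses the formula to $T_H(\rho)=\mathrm{Tr}(\rho)\,\sigma$, so indeed $T(ww^*)=\sigma$ for every unit vector $w\in H$.

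Second, I would package these pairwise spans into a linear subspace. Let $L$ be the set of unit vectors $x$ with $T(xx^*)=\sigma$, and set $U:=\{0\}\cup\{\lambda x:\lambda\in\mathbb C,\ x\in L\}$. Closure of $U$ under scalar multiplication is immediate, since rescaling to unit length does not change whether a vector lies in $L$. For closure under addition, take nonzero $u,v\in U$; after normalizing we may assume $u,v\in L$. If $u,v$ are parallel then $u+v$ is a scalar multiple of $u$; if they are independent, the first step shows every unit vector of $\mathrm{span}(u,v)$ lies in $L$, and in particular the normalization of $u+v$ does, so $u+v\in U$. Thus $U$ is a linear subspace containing $X_k$, whence $\mathrm{span}(X_k)\subseteq U$; every unit vector $x\in\mathrm{span}(X_k)$ therefore lies in $L$, i.e.\ $T(xx^*)=\sigma$.

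The only real content is the first step, and the point to watch there is that having \emph{both} endpoints $x_i$ and $x_j$ land on the same vertex is exactly what forces the degenerate ellipsoid of Lemma \ref{lem:elli} to shrink to a single point, turning $T_H$ into the constant channel $\rho\mapsto\mathrm{Tr}(\rho)\sigma$. Once this collapse is established, the remainder is the standard observation that a set of vectors closed under taking pairwise spans generates a genuine linear subspace, which needs no further input from the geometry of $\mathrm{Im}(T)$.
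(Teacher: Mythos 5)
Your proof is correct and rests on the same key step as the paper's: apply Lemma \ref{lem:elli} to a two-dimensional subspace containing two preimages of the vertex and evaluate the resulting segment formula at the second preimage, forcing the image of that subspace to collapse to the single point $\sigma$. The paper organizes the passage to all of $\mathrm{span}(X_k)$ as an induction on $k$ with an orthogonalization step, whereas you package it as the (equivalent) observation that the set of preimage directions is closed under pairwise spans and hence forms a linear subspace.
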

\begin{proof}
We can assume that the $x_i$'s are linearly independent. The proof is done by induction over $k$. For $k=1$ the statement is evident. Suppose it is true for all $x\in\mathrm{span}(X_k)$ and $T(x_{k+1}x_{k+1}^*)=\sigma$. Then let $\tilde{x}\in X_{k+1}$ be a unit vector orthogonal to $X_k$ and decompose  $x_{k+1}=c_1x+c_2\tilde{x}$  with $c_1,c_2\in\mathbb{C}$ and $x\in\mathrm{span}(X_k)$. Lemma \ref{lem:elli} applied to $H=\mathrm{span}(x,\tilde{x})$ and $\rho=x_{k+1}x_{k+1}^*$ leads to $T(\tilde{x}\tilde{x}^*)=\sigma$. Consequently, for any unit vector $y\in X_{k+1}$ we obtain $T(yy^*)=\sigma$ when applying Lemma \ref{lem:elli} within $H=\mathrm{span}(y,\tilde{x})$.
\end{proof}

\begin{proposition}\label{prop:channel-vertex-decomposition}
Let $T:M_d(\mathbb C) \to M_n(\mathbb C)$ be a quantum channel whose image has vertices $\{\sigma_i\}_{i=1}^k$. Then, the input Hilbert space $\mathbb C^d$ decomposes as
$$\mathbb C^d = \left[ \bigoplus_{i=1}^k V_i \right] \oplus W = V \oplus W,$$
in such a way that
\begin{equation}\label{eq:channel-vertex-decomposition}
\forall \rho \in \Sc_d: \qquad T(\rho) = T_1(\rho_V) + T_2(\rho_W) = \left[ \sum_{i=1}^k \mathrm{Tr}(P_{V_i}\rho) \sigma_i \right] + T_2(\rho_W).
\end{equation}
In the above expression, for a given subspace $X \subseteq \mathbb C^d$, $P_X$ denotes the orthogonal projection onto $X$ and $\rho_X:=P_X\rho P_X$ is the restriction of $\rho$ to $X$. Note that the channel $T_1$ in \eqref{eq:channel-vertex-decomposition} is classical-quantum.
\end{proposition}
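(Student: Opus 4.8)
The plan is to build the decomposition directly out of the preimages of the vertices and then to let Lemma~\ref{lem:elli} act as the engine that forces $T$ to be block-diagonal. First I would define, for each vertex $\sigma_i$, the subspace
$$V_i := \mathrm{span}\{x \in \mathbb C^d : \|x\|=1, \ T(xx^*) = \sigma_i\}.$$
Since each $\sigma_i$ is an extreme point of $\mathrm{Im}(T)$, it has a pure preimage, so every $V_i$ is nonzero, and Corollary~\ref{cor:vecpace} supplies the key closure property that \emph{every} unit vector in $V_i$ is again a preimage of $\sigma_i$. Mutual orthogonality $V_i\perp V_j$ for $i\neq j$ is precisely the last sentence of Lemma~\ref{lem:elli}: for unit $u\in V_i$, $v\in V_j$ the images $T(uu^*)=\sigma_i$ and $T(vv^*)=\sigma_j$ are distinct vertices, forcing $u\perp v$. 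I then set $V:=\bigoplus_i V_i$ and $W:=V^\perp$, which gives the asserted orthogonal decomposition $\mathbb C^d = \bigoplus_i V_i \oplus W$.

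The heart of the argument is to show that $T$ annihilates every off-diagonal block relative to this decomposition. Writing $Q_0:=P_W$ and $Q_i:=P_{V_i}$, so that $\mathbbm 1=\sum_{a=0}^k Q_a$ and $\rho=\sum_{a,b}Q_a\rho Q_b$, I claim $T(Q_a\rho Q_b)=0$ for $a\neq b$. Indeed, fix a unit vector $u\in V_i$ and any unit vector $z$ orthogonal to $u$; in our situation $z$ lies in another $V_j$ or in $W$, and orthogonality is automatic from the first paragraph. Applying Lemma~\ref{lem:elli} to $H=\mathrm{span}(u,z)$ with $x=u$ gives, since $z=u_\perp$,
$$T_H(\eta) = \langle u,\eta u\rangle\, \sigma_i + \langle z,\eta z\rangle\, T(zz^*),$$
an identity valid for all $\eta\in\mathrm{End}(H)$ by linearity. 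Evaluating at $\eta=uz^*$ makes both inner products vanish (here $\langle z,u\rangle=0$), whence $T(uz^*)=0$. Since an off-diagonal block $Q_i\rho Q_j$ (respectively $Q_i\rho Q_0$) is a linear combination of rank-one operators $uz^*$ with $u\in V_i$ and $z$ in $V_j$ (respectively $W$), linearity yields $T(Q_a\rho Q_b)=0$ for all $a\neq b$, and hence $T(\rho)=\sum_a T(Q_a\rho Q_a)$.

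It remains to identify the two surviving diagonal contributions. On each $\mathrm{End}(V_i)$ the linear map $\eta\mapsto T(\eta)-\mathrm{Tr}(\eta)\,\sigma_i$ vanishes on every rank-one projection $xx^*$ with $x\in V_i$, and since such projections span the Hermitian matrices of $V_i$ (hence, after complexification, all of $\mathrm{End}(V_i)$), we obtain $T(Q_i\rho Q_i)=\mathrm{Tr}(Q_i\rho Q_i)\,\sigma_i=\mathrm{Tr}(P_{V_i}\rho)\,\sigma_i$. Defining $T_2$ to be the restriction of $T$ to $\mathrm{End}(W)$ gives $T(Q_0\rho Q_0)=T_2(\rho_W)$, and assembling the terms produces exactly \eqref{eq:channel-vertex-decomposition}. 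Finally, fixing an orthonormal basis of $V$ subordinate to $V=\bigoplus_i V_i$ rewrites $T_1(\rho):=\sum_i \mathrm{Tr}(P_{V_i}\rho)\,\sigma_i$ in the form \eqref{eq:genform} with rank-one orthogonal POVM elements, exhibiting $T_1$ as a classical-quantum channel.

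The main obstacle I anticipate is conceptual rather than computational: recognizing that the seemingly local vertex condition propagates into the strong global rigidity that $T$ act diagonally on \emph{every} plane through a vertex preimage, and then organizing the spans so that Lemma~\ref{lem:elli} covers all the needed generators (both the $V_i$–$V_j$ and the $V_i$–$W$ blocks, always with the vertex vector playing the role of $x$). Once this reduction to two-dimensional subspaces is in place, the remainder follows from linearity and the fact that rank-one projections span the matrix algebra.
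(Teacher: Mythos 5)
Your proposal is correct and follows essentially the same route as the paper: define $V_i$ as the span of the pure preimages of $\sigma_i$, get mutual orthogonality from the last part of Lemma~\ref{lem:elli}, and then use Lemma~\ref{lem:elli} together with Corollary~\ref{cor:vecpace} on two-dimensional subspaces spanned by vectors from different summands to force block-diagonality. You merely spell out more explicitly the steps the paper compresses (the vanishing of $T$ on off-diagonal rank-one operators $uz^*$ and the spanning argument identifying $T$ on $\mathrm{End}(V_i)$ with $\eta\mapsto\mathrm{Tr}(\eta)\sigma_i$).
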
 
\begin{proof}
Start by defining, for all $i=1,\ldots , k$
\begin{equation}\label{eq:def-Vi}
V_i  = \mathrm{span} \{x \in \mathbb C^d \, : ||x||=1\ \wedge \ T(xx^*) = \sigma_i\}.
\end{equation}
Using the second part of Lemma \ref{lem:elli}, we deduce that the spaces $V_i$ are orthogonal, and we define then $W=V_0$ to be the orthogonal complement of their direct sum. 

Let us choose an orthonormal basis $\{x_i\}_{i=1}^d$ of $\mathbb C^d$ that respects the direct sum decomposition above. Consider now two basis elements $x_i,x_j$, such that $x_i \in V_r$, $x_j \in V_s$, with $r,s=0, 1, \ldots, k$, $r \neq s$. 
By applying Lemma \ref{lem:elli} and Corollary \ref{cor:vecpace} to the subspace $H_{i,j} = \mathrm{span} \{x_i, x_j \}$, the restriction of $T$ to $H_{i,j}$ can be written as
\be
T_{H_{i,j}} (\rho)  = \bra x_i,  \rho  x_i \ket T(x_ix_i^*) + \bra x_j , \rho  x_j \ket  T(x_jx_j^*).
\ee
This implies that the action of the map $T$ is block-diagonal with respect to the direct sum decomposition of $\mathbb C^d$.
 In particular, $T(\rho)=\sum_{i=0}^k T(P_{V_i}\rho P_{V_i})$ and Eq.(\ref{eq:channel-vertex-decomposition}) follows by linearity.
\end{proof} 

This leads to a complete characterization of maps with polytopic images:

\begin{theorem}[Channels with polytopic images]\label{thm:polytopic-image}
Let $T:M_d(\mathbb C) \to M_n(\mathbb C)$ be a quantum channel whose image is a convex polytope with $k$ vertices $\{\sigma_i\}_{i=1}^k$. 
Then, there exists a subspace $V \subseteq \mathbb C^d$ such that 
\be
T(\rho) = T_1 (\rho_V) + T_2 (\rho_{V^\perp}) 
\ee
where $T_1$ is a classical-quantum channel written as in \eqref{eq:channel-vertex-decomposition} and $T_2$ is such that $\mathrm{Im} (T_2) \subseteq \mathrm{Im} (T_1)$ and, for all $i$, $\sigma_i \notin \mathrm{Im}(T_2)$. Conversely, every map of this form has polytopic image with vertices $\{\sigma_i\}$.
\end{theorem}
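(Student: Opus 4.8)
The plan is to obtain the orthogonal decomposition and the block form of $T$ directly from Proposition~\ref{prop:channel-vertex-decomposition}: it supplies $\mathbb{C}^d = V \oplus V^\perp$ with $V = \bigoplus_{i=1}^k V_i$, $W = V^\perp$, and $T(\rho) = T_1(\rho_V) + T_2(\rho_{V^\perp})$, where $T_1(\tau) = \sum_i \operatorname{Tr}(P_{V_i}\tau)\sigma_i$ is classical-quantum and $T_2 = T|_{\mathrm{End}(V^\perp)}$. Since $T_1$ is classical-quantum, $\mathrm{Im}(T_1) = \operatorname{hull}(\sigma_1,\dots,\sigma_k)$, and because $\mathrm{Im}(T)$ is by hypothesis the polytope with vertex set $\{\sigma_i\}$, we already have $\mathrm{Im}(T) = \mathrm{Im}(T_1)$. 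It then remains to establish the two properties of $T_2$ (forward direction) and to check that the stated form forces a polytopic image with exactly these vertices (converse direction).

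For the forward direction, the inclusion $\mathrm{Im}(T_2) \subseteq \mathrm{Im}(T_1)$ is immediate: a state $\rho$ supported on $V^\perp$ has $\rho_V = 0$, so $T(\rho) = T_2(\rho) \in \mathrm{Im}(T) = \mathrm{Im}(T_1)$, and letting $\rho$ range over all states on $V^\perp$ gives the claim. For the second property, suppose toward a contradiction that $\sigma_i = T_2(\rho)$ for some state $\rho$ on $V^\perp$. Decomposing $\rho = \sum_j p_j \psi_j\psi_j^*$ into pure states on $V^\perp$ yields $\sigma_i = \sum_j p_j T(\psi_j\psi_j^*)$; since $\sigma_i$ is a vertex and hence an extreme point of $\mathrm{Im}(T)$, every term with $p_j > 0$ obeys $T(\psi_j\psi_j^*) = \sigma_i$. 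By the defining property \eqref{eq:def-Vi} of $V_i$, such a unit vector $\psi_j$ lies in $V_i \subseteq V$, contradicting $\psi_j \in V^\perp$ with $\|\psi_j\| = 1$ (as $V \cap V^\perp = \{0\}$). Thus $\sigma_i \notin \mathrm{Im}(T_2)$ for every $i$.

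For the converse, assume $T$ has the stated form. Given an arbitrary state $\rho$, set $p = \operatorname{Tr}(P_V\rho)$, so that $\rho_V$ and $\rho_{V^\perp}$ are sub-normalized states of traces $p$ and $1-p$. By linearity $T(\rho) = p\,T_1(\rho_V/p) + (1-p)\,T_2(\rho_{V^\perp}/(1-p))$, a convex combination of a point of $\mathrm{Im}(T_1) = \operatorname{hull}(\sigma_i)$ and a point of $\mathrm{Im}(T_2) \subseteq \mathrm{Im}(T_1)$ (a vanishing block simply drops out), whence $T(\rho) \in \operatorname{hull}(\sigma_i)$ and $\mathrm{Im}(T) \subseteq \operatorname{hull}(\sigma_i)$. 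The reverse inclusion follows by evaluating $T$ on $xx^*$ for a unit vector $x \in V_i$, which gives $T(xx^*) = T_1(xx^*) = \sigma_i$. Hence $\mathrm{Im}(T) = \operatorname{hull}(\sigma_i) = \mathrm{Im}(T_1)$ is a convex polytope; and since the $\sigma_i$ are, by the reading of \eqref{eq:channel-vertex-decomposition}, the vertices of $\mathrm{Im}(T_1)$ --- in particular in convex position --- and since vertices and extreme points coincide for polytopes, they are precisely the vertices of $\mathrm{Im}(T)$.

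The only genuinely non-routine step is proving $\sigma_i \notin \mathrm{Im}(T_2)$; its essential input is the extreme-point/pure-preimage principle combined with the defining property \eqref{eq:def-Vi} of $V_i$, which together forbid a pure preimage of a vertex from lying in the block $V^\perp$. Everything else is bookkeeping: tracking the sub-normalized blocks and trace weights in the converse, where the only mild care concerns the degenerate weights $p \in \{0,1\}$ for which one block vanishes.
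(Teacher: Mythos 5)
Your proof is correct and follows essentially the same route as the paper: both rest entirely on Proposition~\ref{prop:channel-vertex-decomposition} for the block decomposition, derive $\mathrm{Im}(T_2)\subseteq\mathrm{Im}(T_1)$ from $\mathrm{Im}(T)=\mathrm{Im}(T_1)=\operatorname{hull}(\sigma_i)$, and exclude $\sigma_i$ from $\mathrm{Im}(T_2)$ via the fact that any pure preimage of a vertex must lie in $V_i\subseteq V$ by \eqref{eq:def-Vi}. You merely spell out two points the paper leaves implicit --- the extreme-point decomposition behind $\sigma_i\notin\mathrm{Im}(T_2)$ and the bookkeeping for the converse --- which is fine.
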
 
\begin{proof}
The channel decomposition and the fact that $T_1$ is a classical-quantum channel  follow from Proposition \ref{prop:channel-vertex-decomposition}. The inclusion of images follows easily from the fact that 
$$\mathrm{hull}(\{\sigma_i\}_{i=1}^k)=\mathrm{Im}(T_1) = \mathrm{Im}(T) = \bigcup_{\lambda\in [0,1]} (1-\lambda) \mathrm{Im}(T_1) + \lambda\mathrm{Im}(T_2).$$
The fact that the vertices $\sigma_i$ do not belong to the image of $T_2$ follows again from Proposition \ref{prop:channel-vertex-decomposition} since the only pure quantum states $x$ such that $T(xx^*) = \sigma_i$ belong to $V_i \subseteq V$, see Eq.\eqref{eq:def-Vi}.
\end{proof}

\begin{corollary}[Dimension of polytopic images] Let $T:M_d(\mathbb{C})\rightarrow M_n(\mathbb{C})$ be a quantum channel whose image is a convex polytope with $k$ vertices. Then
$$\mathrm{dim}\big(\mathrm{Im}(T)\big)\leq k-1\leq d-1.$$
\end{corollary}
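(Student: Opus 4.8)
The plan is to establish the two inequalities separately, since they have entirely different origins: the bound $\dim\big(\mathrm{Im}(T)\big) \le k-1$ is a statement of elementary convex geometry valid for any polytope, whereas $k \le d$ draws on the structural decomposition obtained in Proposition \ref{prop:channel-vertex-decomposition}.

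For the first inequality I would use that $\mathrm{Im}(T)$ is, by hypothesis, the convex hull of its $k$ vertices $\sigma_1, \ldots, \sigma_k$. The affine hull of these $k$ points is a translate of $\mathrm{span}\{\sigma_2 - \sigma_1, \ldots, \sigma_k - \sigma_1\}$, which has dimension at most $k-1$. Since a convex set is contained in its affine hull and has, by definition, the dimension of that affine hull, this gives $\dim\big(\mathrm{Im}(T)\big) \le k-1$ immediately. (Equality holds precisely when the $\sigma_i$ are affinely independent, i.e. when the polytope is a simplex.)

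For the second inequality $k-1 \le d-1$, equivalently $k \le d$, I would invoke the orthogonal decomposition $\mathbb C^d = \big[\bigoplus_{i=1}^k V_i\big] \oplus W$ from Proposition \ref{prop:channel-vertex-decomposition}, where $V_i = \mathrm{span}\{x : \|x\|=1,\ T(xx^*) = \sigma_i\}$. The two things to check are that these $k$ subspaces are mutually orthogonal and that each of them is nonzero. Orthogonality is furnished by Proposition \ref{prop:channel-vertex-decomposition} (ultimately by the second part of Lemma \ref{lem:elli}), while each $V_i$ is nonzero because $\sigma_i$ is a vertex, hence an extreme point of $\mathrm{Im}(T)$, so there is at least one pure state $xx^*$ in its preimage and thus $\dim V_i \ge 1$. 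Summing dimensions over the orthogonal direct sum then yields $d \ge \sum_{i=1}^k \dim V_i \ge k$, as desired.

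I do not expect a genuine obstacle here: the first inequality is purely formal and the second is a direct reading of the already-established decomposition. The only point requiring a moment of care is verifying that each $V_i$ is nonzero, which is exactly where the vertex (extreme point) property enters, via the observation that every extreme point of $\mathrm{Im}(T)$ has a pure state in its preimage.
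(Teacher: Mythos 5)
Your proposal is correct and follows essentially the same route as the paper, which likewise attributes the first inequality to general convex geometry and the second to the orthogonal direct sum decomposition of Proposition \ref{prop:channel-vertex-decomposition}. You simply spell out the details (affine hull of $k$ points, and nonemptiness of each $V_i$ via the pure-state preimage of a vertex) that the paper leaves implicit.
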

\begin{proof}
While the first inequality holds for all convex polytopes, the second is a consequence of the direct sum decomposition in Prop.\ref{prop:channel-vertex-decomposition}.
\end{proof}
Note, that if we do not constrain the dimension of the input space, then any convex polytope $C\subseteq\Sc_n$ can be obtained as the image of a classical-quantum channel: if $\{\sigma_i\in\Sc_n\}_{i=1}^k$ is the set of vertices of $C$, we can choose $d=k$ and an orthonormal basis $\{e_i\}$ in $\mathbb{C}^d$ such that $T(\rho)=\sum_i\langle e_i,\rho e_i\rangle \sigma_i$ will have $\mathrm{Im}(T)=C$.

\section{Additivity of minimum output entropy}
\label{sec:additivity}

In the previous section, quantum channels having polytopic image have been shown to be the sum of a CQ channel and an arbitrary channel whose image is included in the image of the CQ channel. Since the minimum output entropy of a channel is a function of its image, one may think that the additivity properties of channels with polytopic images should be identical to the ones of CQ channels. In this section, we show that this intuition is false, by constructing examples of quantum channels that have polytopic images but violate the additivity of the minimum output ($p$-R\'enyi) entropy. The intuition behind the counterexamples is the fact that the image of the CQ channel can ``hide'' the image of a non-additive channel, and taking tensor products might reveal the non-additivity. 

Recall that the $p$-R\'enyi entropy of a probability distribution $x=(x_1, \ldots, x_n)$ is defined, for all $p \geq 1$, by
$$H^{(p)}(x) = \frac{\log \sum_{i=1}^n x_i^p}{1-p},$$
where the value at $p=1$ is obtained by taking the limit $p \to 1$ and it is equal to the Shannon entropy
$$H(x) = H^{(1)}(x) = \sum_{i=1}^n x_i \log x_i.$$
These definitions extend, via functional calculus, to density matrices $\rho \in \Sc_n$, the corresponding quantities being called the \emph{R\'enyi entropies} of $\rho$; the value at $p=1$ is called the \emph{von Neumann} entropy of the density matrix $\rho$. For a quantum channel $T:M_d(\mathbb C) \to M_n(\mathbb C)$, the \emph{minimum output $p$-R\'enyi entropy} is defined by
$$H^{(p)}_{\min} (T) = \min_{\rho \in \Sc_d} H^{(p)} (T(\rho)).$$
Note that the minimum above is attained for rank one projectors, since the entropy functions are concave. The additivity of the minimum output entropy $H^{(p)}_{\min}$ plays an important role in quantum information theory  and was, for a long time, an open question: given two quantum channels $T_{1,2}$, is it true in general that
\begin{equation}\label{eq:additivity}
H^{(p)}_{\min} (T_1 \otimes T_2) = H^{(p)}_{\min} (T_1) + H^{(p)}_{\min} (T_2) \quad ?
\end{equation}

Celebrated results by Hastings \cite{has09}, Hayden and Winter \cite{hwi} show that, for all $p \geq 1$, there exist quantum channels that do not satisfy the additivity condition above. However, CQ channels always satisfy the additivity equation \eqref{eq:additivity}; more generally, all entanglement breaking channels are additive \cite{sho02}. It is thus natural to ask whether quantum channels having polytopic images are always additive. A negative answer is provided in the following, by ``hiding'' Hastings-Hayden-Winter counterexamples behind CQ channels. 
Since entanglement breaking channels have additive minimal output entropies, the resulting channels cannot be entanglement breaking despite the fact that they have polytopic image. 
 
\begin{theorem}
For all $p \geq 1$, there exist dimensions $d,n$, and a quantum channel $T:M_d(\mathbb C) \to M_n(\mathbb C)$ having a polytopic image, such that the pair $(T , \bar T)$ violates the additivity relation \eqref{eq:additivity}. 
\end{theorem}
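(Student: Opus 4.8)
The plan is to realize the channel $T$ via the block decomposition of Theorem \ref{thm:polytopic-image}, hiding a Hastings--Hayden--Winter counterexample in the $T_2$-block behind a classical--quantum cap $T_1$, and to exploit the \emph{strict} additivity gap of the counterexample to leave enough room to close up the image into a polytope. First I would fix, for the given $p$, a channel $S:M_D(\mathbb C)\to M_N(\mathbb C)$ with $H^{(p)}_{\min}(S\otimes \bar S) < 2H^{(p)}_{\min}(S)$, as provided by \cite{has09,hwi}. Set $h:=H^{(p)}_{\min}(S)<\log N$ and $\Delta:=2h-H^{(p)}_{\min}(S\otimes\bar S)>0$. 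The geometric fact driving the construction is that $\mathrm{Im}(S)$ lies in the super-level set $L_h:=\{\rho\in\Sc_N:H^{(p)}(\rho)\ge h\}$, which is convex by concavity of $H^{(p)}$, and that $L_h$ is compactly contained in the interior of the strictly larger convex body $L_{h-\Delta/2}$.

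Next I would build $T:M_{D+k}(\mathbb C)\to M_N(\mathbb C)$ by splitting the input space as $V\oplus V^\perp$ with $V^\perp=\mathbb C^D$ and $V=\mathbb C^k$, putting $T_2=S$ on the $V^\perp$-block and a CQ channel $T_1(\rho)=\sum_{i=1}^k\langle e_i,\rho e_i\rangle\sigma_i$ on $V$, and defining $T(\rho)=T_1(\rho_V)+T_2(\rho_{V^\perp})$ exactly as in \eqref{eq:channel-vertex-decomposition}. By the computation in the proof of Theorem \ref{thm:polytopic-image} one has $\mathrm{Im}(T)=\mathrm{hull}\big(\mathrm{Im}(T_1)\cup\mathrm{Im}(T_2)\big)$, so the image is the polytope $P:=\mathrm{hull}(\sigma_1,\dots,\sigma_k)$ precisely when $\mathrm{Im}(S)\subseteq P$. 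I would choose the $\sigma_i$ to enclose $\mathrm{Im}(S)$ inside $L_{h-\Delta/2}$: since $\mathrm{Im}(S)$ is a compact subset of the interior of the convex body $L_{h-\Delta/2}$, elementary convex geometry yields finitely many density matrices $\sigma_i\in L_{h-\Delta/2}$ with $\mathrm{Im}(S)\subseteq P\subseteq L_{h-\Delta/2}$. These $\sigma_i$ then satisfy $H^{(p)}(\sigma_i)\ge h-\Delta/2$, lie outside $\mathrm{Im}(S)$ (hence are genuine vertices of $P$), and obey $\mathrm{Im}(T_2)\subseteq\mathrm{Im}(T_1)$, so that $T$ meets all hypotheses of Theorem \ref{thm:polytopic-image} and has polytopic image with vertices $\{\sigma_i\}$.

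The entropy bookkeeping then closes the argument. Because $\mathrm{Im}(T)=P$ is convex and $H^{(p)}$ is concave, its minimum is attained at a vertex, so $H^{(p)}_{\min}(T)=\min_i H^{(p)}(\sigma_i)>h-\Delta/2$; the conjugate $\bar T$ has the same block form with $\bar\sigma_i$ and $\bar S$, whence $H^{(p)}_{\min}(\bar T)=H^{(p)}_{\min}(T)$. Feeding $T\otimes\bar T$ only inputs supported on $V^\perp\otimes V^\perp$ makes it act as $S\otimes\bar S$, giving $H^{(p)}_{\min}(T\otimes\bar T)\le H^{(p)}_{\min}(S\otimes\bar S)=2h-\Delta$. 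Combining these,
$$H^{(p)}_{\min}(T\otimes\bar T)\ \le\ 2h-\Delta\ <\ 2\min_i H^{(p)}(\sigma_i)\ =\ H^{(p)}_{\min}(T)+H^{(p)}_{\min}(\bar T),$$
which is the claimed violation of \eqref{eq:additivity}.

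The hard part will be the interplay between the two competing demands on the vertices $\sigma_i$: they must convexly enclose $\mathrm{Im}(S)$ yet carry entropy as close to $h$ as possible. Any polytope containing $\mathrm{Im}(S)$ also contains its minimal-entropy output, so the least vertex entropy cannot exceed $h$; and it cannot \emph{equal} $h$, since the boundary $\partial L_h$ is strictly curved whereas polytope faces are flat and would protrude from $L_h$. The strict gap $\Delta>0$ is exactly what dissolves this obstruction: it permits the enclosure to take place in the slightly larger body $L_{h-\Delta/2}$, in whose interior $\mathrm{Im}(S)$ sits comfortably, so that the capping polytope exists by a routine inscription argument. The remaining points—that the block map is a legitimate quantum channel and that $\bar T$ inherits the polytopic structure—are immediate from Proposition \ref{prop:channel-vertex-decomposition}.
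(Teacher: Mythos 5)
Your overall strategy --- hiding a Hastings--Hayden--Winter counterexample $S$ in one diagonal block behind a classical--quantum ``cap'' whose polytopic image contains $\mathrm{Im}(S)$, and then comparing minimum output entropies --- is exactly the strategy of the paper's proof, and the final entropy bookkeeping is essentially correct (modulo two cosmetic points: for $p>1$ the R\'enyi entropy is only quasi-concave, not concave, though its super-level sets are still convex since they are sub-level sets of the norm $\|\cdot\|_p$; and you should enclose in $L_{h-\Delta/4}$, say, so that $2\min_i H^{(p)}(\sigma_i)\geq 2h-\Delta/2>2h-\Delta$, since the strict inequality ``$>h-\Delta/2$'' you assert for vertices merely lying in $L_{h-\Delta/2}$ is not justified).

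The genuine gap is the claim that $\mathrm{Im}(S)$ is ``a compact subset of the interior of the convex body $L_{h-\Delta/2}$'', from which you deduce the existence of a capping polytope whose vertices are \emph{density matrices}. This is false in general: a state with a zero eigenvalue lies on the boundary of $\Sc_N$, hence on the boundary of $L_{h-\Delta/2}\subseteq\Sc_N$ relative to the trace-one Hermitian hyperplane, no matter how large its entropy is. So if the additivity-violating channel $S$ has some rank-deficient output --- which is not excluded for an arbitrary HHW counterexample --- then $\mathrm{Im}(S)$ touches $\partial\Sc_N$ and an enclosing polytope contained in $\Sc_N$ need not exist (the extreme case $\mathrm{Im}(S)=\Sc_N$ makes the obstruction plain). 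The entropy level sets are a red herring; the real constraint is the distance of $\mathrm{Im}(S)$ to the boundary of the state space. The paper handles precisely this point by invoking the result of \cite{cn2} that, in the asymptotic regime where additivity violation is typical, random channels also typically have all output states of full rank, and by choosing the counterexample in the intersection of the two typical events. Alternatively, you could repair your argument by replacing $S$ with $(1-\delta)S+\delta\Delta$ for small $\delta>0$ (pushing the image into the interior of $\Sc_N$) and verifying by a continuity estimate that the additivity violation survives; but some such step is needed and is currently missing.
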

\begin{proof}
The starting point of the proof is the idea to hide the image of a given additivity-violating channel inside a polytope, which in turn can be realized as an image of some CQ channel.   
In order for this to be possible, it is useful if the given image is in the interior of the state space, i.e., if all output states have full rank.
 
To restrict ourselves to such examples, we use a result from \cite{cn2}, which states that random quantum channels typically do not have any zero eigenvalues for large enough $d \in \mathbb N$ if $\lim_{d \to \infty}  \frac{d}{N(d)} <1 $, where $N(d)$ is the dimension of the environment. 
On the other hand, in such an asymptotic regime, the violation of additivity was proven to be typical in \cite{fk,asw,fuk,cn2} (see also \cite{has09, bho, aswR}).
That is, there exists some channel $T_2:M_{d_2}(\mathbb C) \to M_n(\mathbb C)$ that violates the additivity equation \eqref{eq:additivity} for some $\varepsilon >0$:
$$
H^{(p)}_{\min} (T_2 \otimes \bar T_2) < 2 H^{(p)}_{\min} (T_2) - 2\varepsilon.$$
Here, $\bar T_2$ is the complex conjugate of $T_2$, which is defined by taking complex conjugate  Kraus operators. 
Importantly, we can find such a $T_2$ where all the output states have full rank by taking the intersection of the two typical phenomena.

Now we hide the image of $T_2$ inside some sufficiently tight polytopic image. By standard arguments there is an arbitrary good outer approximation of $\mathrm{Im}(T_1)$ by a convex polytope $P\subseteq\Sc_n$ (cf.\cite{Gruber}). In particular, we can construct a CQ channel 
 $T_1:M_{d_1}(\mathbb C) \to M_n(\mathbb C)$ that has $P$ as its image and by
choosing $P$ sufficiently close to $\mathrm{Im}(T_2)$ (at the cost of increasing $d_1$) we can guarantee that 
$$H^{(p)}_{\min} (T_1) \geq  H^{(p)}_{\min} (T_2) - \varepsilon.$$
As before, define $T:M_{d_1+d_2}(\mathbb C) \to M_n(\mathbb C)$ by 
$$T(\rho) = T_1(\rho_1) + T_2(\rho_2),$$
where $\rho_1$ and $\rho_2$ are the two diagonal blocks of $\rho$ having respective dimensions $d_1$ and $d_2$. We note that $H^{(p)}_{\min} (T)  = H^{(p)}_{\min} (T_1) $ and we have
$$H^{(p)}_{\min} (T \otimes \bar T) \leq H^{(p)}_{\min} (T_2 \otimes \bar T_2) < 2 H^{(p)}_{\min} (T_2) - 2\epsilon 
\leq 2 H^{(p)}_{\min} (T_1) = 2 H^{(p)}_{\min} (T),$$
which shows that the pair $(T, \bar T)$ violates the additivity relation \eqref{eq:additivity}.
\end{proof}

\section{Fixed points of entanglement breaking channels}
\label{sec:fixed-points}

In order to prepare ourselves for the study of image additive channels in the subsequent section, we have to make a brief excursion and investigate the fixed point structure of entanglement breaking channels.

Throughout this section, we fix an entanglement breaking channel $T:M_d(\mathbb C) \to M_d(\mathbb C)$ having the same input and output space. 

\begin{lemma}\label{lem:no-algebra}
An entanglement breaking channel cannot leave invariant a matrix algebra of dimension $r \geq 2$.
\end{lemma}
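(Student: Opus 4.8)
The plan is to reduce the claim to the single fact that the \emph{identity} channel on $M_r(\mathbb{C})$ fails to be entanglement breaking as soon as $r \geq 2$. I read ``$T$ leaves invariant a matrix algebra $\mathcal{A} \cong M_r(\mathbb{C})$'' in the sense relevant to the fixed-point analysis of this section: $T(\mathcal{A}) = \mathcal{A}$ and the restriction $T|_{\mathcal{A}}$ is a $*$-automorphism of $\mathcal{A}$ (the case $T|_{\mathcal{A}} = \id$ being a special instance). The contradiction I aim for is that such a reversible action simply cannot be realized by an entanglement breaking map.

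First I would promote the restriction to a genuine channel on $M_r(\mathbb{C})$. Writing $E_{\mathcal{A}}$ for the trace-preserving conditional expectation onto $\mathcal{A}$ (a completely positive, trace-preserving map, hence a channel) and $\iota$ for the inclusion, set $\hat{T} := E_{\mathcal{A}} \circ T \circ \iota : \mathcal{A} \to \mathcal{A}$. Because $T(\mathcal{A}) \subseteq \mathcal{A}$, the expectation $E_{\mathcal{A}}$ acts trivially on the output, so $\hat{T}$ coincides with $T|_{\mathcal{A}}$. The key structural input is that entanglement breaking maps form a two-sided ideal under composition with completely positive maps; equivalently, an entanglement breaking $T$ admits a measure-and-prepare factorization $T = \Phi \circ \Psi$ through a commutative algebra, and pre- and post-composing with $\iota$ and $E_{\mathcal{A}}$ preserves this factorization. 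Hence $\hat{T} = T|_{\mathcal{A}}$ is itself entanglement breaking.

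Next I would exploit that $T|_{\mathcal{A}}$ is a $*$-automorphism of $M_r(\mathbb{C})$. All automorphisms of a full matrix algebra are inner, so $T|_{\mathcal{A}}(\rho) = U \rho U^*$ for some unitary $U$; in particular $T|_{\mathcal{A}}$ is unitary conjugation (and reduces to $\id$ in the fixed-point case). The final step contradicts the entanglement breaking property directly through the Choi matrix: the Choi matrix of $\rho \mapsto U \rho U^*$ is $(\mathbbm{1} \otimes U)\, \Omega\, (\mathbbm{1} \otimes U)^*$, where $\Omega$ is the (unnormalized) maximally entangled projector, and this is a \emph{pure entangled} state whenever $r \geq 2$. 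Since a channel is entanglement breaking if and only if its Choi matrix is separable, and a pure state is separable only if it is a product state, $T|_{\mathcal{A}}$ cannot be entanglement breaking. This contradicts the previous paragraph and proves the lemma.

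The routine parts are the verification that $E_{\mathcal{A}}$ is a channel and that automorphisms of $M_r(\mathbb{C})$ are inner. The step I expect to demand the most care is the preservation of the entanglement breaking property under restriction to $\mathcal{A}$: one must ensure the compression is carried out with respect to a conditional expectation that is genuinely trace-preserving on the embedded copy of $M_r(\mathbb{C})$ inside $M_d(\mathbb{C})$, so that $\hat{T}$ is a legitimate channel on $\mathcal{A}$ and the ideal property of entanglement breaking maps applies cleanly.
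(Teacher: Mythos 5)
Your proof is correct, but it takes a more elaborate (and more general) route than the paper's. The paper reads ``leaves invariant'' as pointwise invariance --- $T(X)=X$ for every $X\in\mathrm{End}(V)$ with $\dim V=r\ge 2$, which is exactly the situation arising in the fixed-point application --- and then argues in one line: for the maximally entangled vector $\psi=r^{-1/2}\sum_i e_i\otimes e_i$ supported on $V\otimes V$ one has $[T\otimes\id](\psi\psi^*)=\psi\psi^*$, an entangled state, contradicting the defining property of an entanglement breaking channel. Your argument rests on the same key fact (the Choi matrix of a unitary conjugation on $M_r(\mathbb C)$, $r\ge2$, is a pure entangled state, hence not separable), but reaches it through extra machinery: the compression $E_{\mathcal A}\circ T\circ\iota$, the two-sided ideal property of entanglement breaking maps under composition with completely positive maps, and innerness of $*$-automorphisms of $M_r(\mathbb C)$. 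What this buys you is a genuinely stronger conclusion: no entanglement breaking channel can restrict to \emph{any} $*$-automorphism of a matrix algebra of dimension $r\ge2$, not just to the identity. Two remarks. First, the extra hypothesis you impose (that $T|_{\mathcal A}$ is a $*$-automorphism rather than merely $T(\mathcal A)=\mathcal A$ setwise) is not cosmetic: a depolarizing channel with a small positive parameter is entanglement breaking and bijective on $M_r(\mathbb C)$, so setwise invariance alone would make the generalized statement false; you were right to build the automorphism condition into your reading. Second, your worry about the compression step is well placed but resolves exactly as you say: since $T(\mathcal A)\subseteq\mathcal A$, the conditional expectation acts trivially on the output, the composite is trace preserving on $\mathcal A$, and the measure-and-prepare form survives pre- and post-composition with completely positive maps. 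Since the identity case is all that is needed where the lemma is invoked (Theorem~\ref{thm:fixed-point-eb} applies it to the pointwise-fixed blocks $M_{d_i}(\mathbb C)\otimes\sigma_i$), both proofs serve the intended application, with the paper's being the more economical and yours the more general.
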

\begin{proof}
Let $V = \mathrm{span}\{e_1, \ldots, e_r\} \subset \mathbb C^d$ be a $r \geq 2$ dimensional subspace such that the entanglement breaking channel $T:M_d(\mathbb C) \to M_d(\mathbb C)$ leaves invariant all operators in $\mathrm{End}(V)$. Consider the entangled state $\psi \in \mathbb C^d \otimes \mathbb C^d$
\begin{equation}
\psi = \frac{1}{\sqrt r} \sum_{i=1}^r e_i \otimes e_i.
\end{equation}
Then 
\begin{equation}
[T \otimes \mathrm{id}](\psi \psi^*) = \frac{1}{ r} \sum_{i,j=1}^r T (e_i e_j^*) \otimes e_i e_j^* = \frac{1}{ r}\sum_{i,j=1}^r e_i e_j \otimes e_i e_j^* =  \psi \psi^*,
\end{equation}
which is a contradiction, since $\psi$ is entangled and $T$ was supposed to break entanglement.
\end{proof}

Following \cite[Section 3]{lin} (see also \cite{wol}), define 
\begin{equation}
T_\infty = \lim_{N \to \infty} \frac 1 N \sum_{n=1}^N T^{n}.
\end{equation}
Since our linear maps live in a compact, finite-dimensional space, the above limit exists and defines a quantum channel that satisfies
\begin{equation}
T_\infty = T \circ T_\infty = T_\infty \circ T =  T_\infty^{2}.\end{equation}
Thus, $T_\infty$ is a completely positive projection on its image, the set of fixed points of $T$
\begin{equation}
\mathcal F_T = \{ X \in M_d(\mathbb C) \, : \, T(X)=X\}.
\end{equation}

The support subspace $V_T$ of $T_\infty(\mathbbm{1}_d)$ plays an important role, since the restriction of $T$ to this subspace has a full-rank invariant state. Using the structure theorem for finite dimensional $\ ^*$-algebras, one can obtain the following result \cite{lin,wol,bnpv}. 

\begin{proposition}\label{prop:fixed-point-general}
Given a quantum channel $T:M_d(\mathbb C) \to M_d(\mathbb C)$, there exist quantum states $\sigma_1, \ldots , \sigma_k \in \Sc_d$ having orthogonal supports such that
\begin{equation}\label{eq:fixed-point-general}
\mathcal F_T = 0_{V_T^\perp} \oplus \bigoplus_{i=1}^k M_{d_i}(\mathbb C) \otimes \sigma_i. 
\end{equation}
\end{proposition}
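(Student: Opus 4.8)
The plan is to combine the idempotent structure of $T_\infty$ recalled above with the classification of finite-dimensional $^*$-algebras, the argument splitting naturally into three parts: a reduction to the support subspace $V_T$, the identification of a $^*$-algebra inside the fixed-point set, and finally reading off the states $\sigma_i$.

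First I would dispose of the orthogonal complement $V_T^\perp$. Writing $\rho := T_\infty(\mathbbm{1}_d)$, which by definition has support exactly $V_T$, let $X = X^* \in \mathcal F_T$ be an arbitrary Hermitian fixed point. From $-\|X\|\,\mathbbm{1}_d \le X \le \|X\|\,\mathbbm{1}_d$, the positivity of $T_\infty$, and $T_\infty(X) = X$, one obtains $-\|X\|\,\rho \le X \le \|X\|\,\rho$, whence $\mathrm{supp}(X) \subseteq \mathrm{supp}(\rho) = V_T$. Since $\mathcal F_T$ is the complex span of its Hermitian elements, every fixed point is supported on $V_T$, which produces the summand $0_{V_T^\perp}$. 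From now on I work on $V_T$ with the restricted map $E := T_\infty|_{\mathrm{End}(V_T)}$, a trace-preserving completely positive idempotent whose image is $\mathcal F_T \cap \mathrm{End}(V_T)$ and which admits the faithful invariant state $\rho$.

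The heart of the argument, and the step I expect to be the main obstacle, is to show that the fixed points carry a $^*$-algebra structure. I would pass to the adjoint map $E^*$, a unital completely positive idempotent having $\rho$ as a faithful invariant state in the predual sense, $\mathrm{Tr}(\rho\, E^*(Y)) = \mathrm{Tr}(\rho\, Y)$. For such maps the Kadison--Schwarz inequality $E^*(Y^*Y) \ge E^*(Y)^* E^*(Y)$ holds, and the existence of a faithful invariant state forces equality precisely on the fixed-point set; the standard multiplicative-domain argument then shows that $\mathcal N := \mathcal F_{E^*}$ is closed under multiplication and adjoints, i.e.\ a finite-dimensional $^*$-algebra containing $\mathbbm{1}_{V_T}$ (this being the content of \cite{lin,wol}). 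The delicate points are establishing the equality case of the Schwarz inequality and verifying that $E^*$ restricts to a $^*$-homomorphism on $\mathcal N$.

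Finally I would invoke the structure theorem for finite-dimensional $^*$-algebras: there is an orthogonal decomposition $V_T = \bigoplus_{i=1}^k \mathbb C^{d_i} \otimes \mathbb C^{m_i}$ in which $\mathcal N = \bigoplus_{i=1}^k M_{d_i}(\mathbb C) \otimes \mathbbm{1}_{m_i}$. Since $E^*$ is a unital completely positive idempotent onto the $^*$-algebra $\mathcal N$, the bimodule (conditional-expectation) property $E^*(A\,Y\,B) = A\,E^*(Y)\,B$ for $A,B \in \mathcal N$ holds, and conditional expectations onto $\mathcal N$ are classified by a choice of states $\sigma_i$ on the factors $\mathbb C^{m_i}$. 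Dualizing, the predual idempotent $E$ acts as $E(Y) = \bigoplus_i \big[\mathrm{Tr}_{m_i}(Y_{ii})\big] \otimes \sigma_i$, so that its image is exactly $\bigoplus_{i=1}^k M_{d_i}(\mathbb C) \otimes \sigma_i$, the $\sigma_i$ inheriting orthogonal supports from the block decomposition. Combining this with the first step yields $\mathcal F_T = 0_{V_T^\perp} \oplus \bigoplus_{i=1}^k M_{d_i}(\mathbb C)\otimes\sigma_i$, as claimed.
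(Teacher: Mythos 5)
Your argument is correct and follows precisely the route the paper itself indicates: the paper states this proposition without proof, attributing it to the structure theorem for finite-dimensional $^*$-algebras via \cite{lin,wol,bnpv}, and your three steps (reduction of supports to $V_T$ using positivity of $T_\infty$ and faithfulness of $T_\infty(\mathbbm{1}_d)$ on $V_T$, the Kadison--Schwarz/multiplicative-domain argument showing that the fixed points of the unital idempotent $T_\infty^*$ form a $^*$-algebra, and the classification of conditional expectations onto $\bigoplus_i M_{d_i}(\mathbb C)\otimes\mathbbm{1}_{m_i}$ followed by dualization) reconstruct exactly the standard proof from those references. There is nothing to correct.
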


\begin{theorem}\label{thm:fixed-point-eb}
Let $T:M_d(\mathbb C) \to M_d(\mathbb C)$ be an \emph{entanglement breaking} quantum channel. The set of fixed points of $T$ is spanned by density matrices $\sigma_1, \ldots, \sigma_k$ with orthogonal supports
\begin{equation}
\mathcal F_T = \mathrm{span}\{\sigma_1, \ldots, \sigma_k\}
\end{equation}
and the channel $T_\infty$ that projects on $\mathcal F_T$ is \emph{essentially classical-quantum}
\begin{equation}\label{eq:Tinfty-ecq} 
T_\infty(\rho) = \sum_{i=1}^k \mathrm{Tr}(M_i\rho) \sigma_i,
\end{equation}
so that the $M_i$ form a POVM with $\|M_1\| = \cdots = \|M_k\| = 1$.
\end{theorem}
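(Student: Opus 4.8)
The plan is to show that the entanglement breaking hypothesis forces the general fixed-point structure of Proposition \ref{prop:fixed-point-general} to collapse to a commutative one, and then to read off the essentially classical-quantum form from the idempotency and trace-preservation of $T_\infty$.

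First I would observe that $T_\infty$ is itself entanglement breaking: the class of entanglement breaking maps is closed under composition with arbitrary channels (so each $T^n$ is EB, since $T$ is), under convex combinations, and under limits; hence the Ces\`aro limit $T_\infty$ is EB. By construction $T_\infty$ is a CPTP idempotent whose image is $\mathcal F_T$, and Proposition \ref{prop:fixed-point-general} provides orthogonal subspaces $K_i$, full-rank states $\sigma_i$ supported on $K_i$, and multiplicities $d_i$ with $\mathcal F_T = 0_{V_T^\perp}\oplus\bigoplus_{i=1}^k M_{d_i}(\mathbb C)\otimes\sigma_i$, where $V_T=\bigoplus_i(\mathbb C^{d_i}\otimes K_i)$ and (by the cited structure theory) $T_\infty$ restricts on each block to the conditional expectation $A\otimes B\mapsto \mathrm{Tr}(B)\,A\otimes\sigma_i$.

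The heart of the argument, and the step I expect to be the main obstacle, is to prove that every multiplicity is trivial, $d_i=1$. This is an extension of Lemma \ref{lem:no-algebra} to blocks carrying a non-pure stationary state. If some $d_i\ge 2$, fix orthonormal $g_1,\dots,g_{d_i}\in\mathbb C^{d_i}$, a unit vector $\xi\in K_i$, and an ancillary orthonormal family $\{h_a\}$, and set $\Psi=\frac{1}{\sqrt{d_i}}\sum_a (g_a\otimes\xi)\otimes h_a$. Using the block action of $T_\infty$, namely $T_\infty\big((g_a\otimes\xi)(g_b\otimes\xi)^*\big)=g_ag_b^*\otimes\sigma_i$, one computes
$$ [T_\infty\otimes\mathrm{id}](\Psi\Psi^*) = \Big(\tfrac{1}{d_i}\sum_{a,b} g_ag_b^*\otimes h_ah_b^*\Big)\otimes\sigma_i,$$
which is maximally entangled across the system/ancilla cut and hence not separable, contradicting the fact that $T_\infty$ breaks entanglement. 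Therefore all $d_i=1$, the blocks are one-dimensional, and $\mathcal F_T=\mathrm{span}\{\sigma_1,\dots,\sigma_k\}$ with the $\sigma_i$ having orthogonal supports; this is the first assertion of the theorem. The delicate point is exactly that $\sigma_i$ need not be pure, so Lemma \ref{lem:no-algebra} cannot be invoked verbatim, and one must feed the genuine stationary state through the conditional-expectation description of $T_\infty$.

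It remains to read off the form of $T_\infty$. Since its image is $\mathrm{span}\{\sigma_i\}$ and the $\sigma_i$ are linearly independent, every output decomposes uniquely as $T_\infty(\rho)=\sum_i c_i(\rho)\sigma_i$ with linear functionals $c_i$, and by the Riesz representation theorem there exist Hermitian $M_i$ with $c_i(\rho)=\mathrm{Tr}(M_i\rho)$. Positivity of $T_\infty$ together with the orthogonality of the supports forces $\mathrm{Tr}(M_i\rho)\ge0$ for all states $\rho$, i.e.\ $M_i\ge0$, while trace preservation gives $\sum_i M_i=\mathbbm{1}$; thus $(M_i)$ is a POVM and $T_\infty$ has the entanglement breaking form of Eq.\eqref{eq:Tinfty-ecq}. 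Finally, because each $\sigma_j$ is a fixed point, $T_\infty(\sigma_j)=\sigma_j$ yields $\mathrm{Tr}(M_i\sigma_j)=\delta_{ij}$; in particular $\mathrm{Tr}(M_j\sigma_j)=1$ with $0\le M_j\le\mathbbm{1}$, so $\mathrm{Tr}\big((\mathbbm{1}-M_j)\sigma_j\big)=0$ forces $\sigma_j$ to be supported in the eigenspace of $M_j$ for the eigenvalue $1$. Hence $\|M_j\|=1$ for every $j$, and $T_\infty$ is essentially classical-quantum, which completes the proof.
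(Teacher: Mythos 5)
Your proof is correct and follows essentially the same route as the paper's: reduce to Proposition \ref{prop:fixed-point-general}, rule out multiplicities $d_i \geq 2$ by feeding a maximally entangled state through $T_\infty \otimes \mathrm{id}$, and then read off the POVM form and the norm-one condition from positivity, trace preservation, and the fact that the $\sigma_i$ are fixed points (the paper's version of the last step is the one-line chain $1 \geq \|M_i\| \geq \mathrm{Tr}(M_i\sigma_i) = 1$). Your explicit adaptation of the entanglement argument to blocks carrying a non-pure stationary state is a worthwhile refinement, since the paper invokes Lemma \ref{lem:no-algebra} verbatim even though that lemma is stated only for invariant matrix algebras $\mathrm{End}(V)$; one can even avoid the conditional-expectation form of $T_\infty$ entirely by noting that $\tfrac{1}{d_i}\sum_{a,b} g_a g_b^* \otimes \sigma_i \otimes h_a h_b^*$ is an entangled state fixed by $T_\infty \otimes \mathrm{id}$, since each $g_a g_b^* \otimes \sigma_i$ already lies in $\mathcal F_T$.
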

\begin{proof}
Note that since $T$ is entanglement breaking, $T_\infty$ is also entanglement breaking. The first statement follows now from the general characterization of fixed-point sets \eqref{eq:fixed-point-general} and Lemma \ref{lem:no-algebra}, which imply that the dimensions $d_i$ appearing in Proposition \ref{prop:fixed-point-general} have to be trivial, $d_i=1$.

Since the image of the non-restricted channel $T_\infty$ is $\mathcal F_T$, there exists a POVM $(M_1, \ldots, M_k)$ such that equation \eqref{eq:Tinfty-ecq} in the statement holds. The norm equalities are obtained using the fact that the $\sigma_i$ are fixed points:
\begin{equation}
1 \geq \|M_i\| \geq \mathrm{Tr}(M_i \sigma_i) = 1.
\end{equation}

\end{proof}

\section{Universally image additive channels}
\label{sec:uia-channels}

In this section, we introduce the notion of \emph{image additive} channels, which is stronger than the usual notion of additivity of minimum output $p$-R\'enyi entropies, introduced in Section \ref{sec:additivity}, see \eqref{eq:additivity}. We then prove the second main result of this paper, Theorem \ref{thm:uia}, which provides a characterization of quantum channels $T$ that are image additive with any other quantum channel $S$. Note that the same task for the usual notion of additivity seems rather difficult for the following reason. On one hand, entanglement breaking channels are additive (in the usual sense) with any other channel; on the other hand, the same holds true for the identity channel (or any other unitary conjugation, for that matter). These channels are of very different nature, and thus the set of quantum channels that are additive with all other channels is not likely to admit an easy description. 

We start with the definition of a pair of image additive channels and of universally image-additive channels.

\begin{definition}[Image additivity]
Two quantum channels $T_i:M_{d_i}(\mathbb C) \to M_{n_i}(\mathbb C)$ , $i=1,2$ are called \emph{image additive} if one of the following equivalent statements is satisfied:
\begin{enumerate}[i.]
\item The image of $T_1 \otimes T_2$ is the convex hull of the tensor product of the images of $T_{1,2}$:
$$\mathrm{Im}(T_1 \otimes T_2) = \mathrm{hull}\left[  \mathrm{Im}(T_1) \otimes \mathrm{Im}(T_2) \right];$$
\item For every unit vector $\psi \in \mathbb C^{d_1} \otimes \mathbb C^{d_2}$, there is a \emph{separable} state $\rho_{sep} \in \Sc_{d_1d_2}$ such that
$$[T_1 \otimes T_2](\psi \psi^* ) = [T_1 \otimes T_2](\rho_{sep}).$$
\end{enumerate}

A channel $T:M_d(\mathbb C) \to M_n(\mathbb C)$ is called \emph{universally image additive} if for all $n',d'\in\mathbb{N}$ and all channels $S:M_{d'}(\mathbb C) \to M_{n'}(\mathbb C)$, the pair $(T,S)$ is image additive.
\end{definition}

Note that image additivity is stronger than minimum $p$-R\'enyi entropy additivity. We now state the main result of this section, a characterization of universally image additive quantum channels.

\begin{theorem}\label{thm:uia}
Let $T:M_d(\mathbb C) \to M_n(\mathbb C)$ be a quantum channel. The following assertions are equivalent:
\begin{enumerate}[a)]
\item \label{it:usa} $T$ is universally image additive;
\item \label{it:sai} $T$ and $\mathrm{id}:M_d(\mathbb C) \to M_d(\mathbb C)$ are image additive;
\item \label{it:S-eb} There exists an entanglement breaking channel $S:M_d(\mathbb C) \to M_d(\mathbb C)$ such that $T=T \circ S$;
\item \label{it:S-ecq} There exists an essentially classical-quantum channel $S:M_d(\mathbb C) \to M_d(\mathbb C)$ such that $T=T \circ S$;
\item \label{it:cq} $T$ is essentially classical-quantum.
\end{enumerate}
\end{theorem}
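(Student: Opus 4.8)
The plan is to prove the cycle of implications $(\ref{it:cq}) \Rightarrow (\ref{it:usa}) \Rightarrow (\ref{it:sai}) \Rightarrow (\ref{it:S-eb}) \Rightarrow (\ref{it:S-ecq}) \Rightarrow (\ref{it:cq})$, which closes the loop among all five assertions. The two implications $(\ref{it:usa}) \Rightarrow (\ref{it:sai})$ (specialize $S=\mathrm{id}$) and $(\ref{it:S-ecq}) \Rightarrow (\ref{it:S-eb})$ (an essentially classical-quantum channel is in particular entanglement breaking) are immediate, so the real content lies in the remaining three steps.

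For $(\ref{it:cq}) \Rightarrow (\ref{it:usa})$ I would use the decomposition $M_i = e_ie_i^* + \tilde M_i$ from Eq.\eqref{eq:Mi-tildeMi} that defines an essentially classical-quantum channel. Given an arbitrary channel $S$ and a unit vector $\psi \in \mathbb C^d \otimes \mathbb C^{d'}$, I want to exhibit a separable state that $T \otimes S$ maps to the same output as $\psi\psi^*$; by the stated equivalence this yields image additivity. The key is that the orthogonality relation \eqref{eq:orth-tildeMi-ej} lets the $\tilde M_i$ components be ignored when the input is supported on the subspace $V=\mathrm{span}\{e_i\}$, so that on product inputs of the form $e_ie_i^* \otimes \tau$ the channel $T$ behaves like a genuine CQ measurement. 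Writing $[T\otimes S](\psi\psi^*) = \sum_i \sigma_i \otimes S\big(\mathrm{Tr}_1[(M_i\otimes \mathbbm 1)\psi\psi^*]\big)$ and noting that each partial trace is a (subnormalized) state, one replaces $\psi\psi^*$ by the separable state $\sum_i e_ie_i^* \otimes \rho_i$ with $\rho_i$ chosen to reproduce the same second-factor inputs; the unit-norm condition $\|M_i\|=1$ guarantees this replacement is achievable with a legitimate density matrix.

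For $(\ref{it:sai}) \Rightarrow (\ref{it:S-eb})$, image additivity of $T$ with $\mathrm{id}$ says that every $[T\otimes\mathrm{id}](\psi\psi^*)$ arises from a separable input. I would define $S$ as the entanglement-breaking channel obtained from the Choi matrix of $T$: the separability guaranteed by image additivity, applied to the maximally entangled state, should force the Choi matrix of $T$ to factor through a separable (hence entanglement-breaking) intermediate map $S$ satisfying $T = T\circ S$. Concretely, one builds $S$ so that $[T\otimes\mathrm{id}]$ and $[(T\circ S)\otimes\mathrm{id}]$ agree on $\psi\psi^*$, and $S$ inherits entanglement breaking from the separable decomposition.

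The main obstacle I expect is the last step $(\ref{it:S-eb}) \Rightarrow (\ref{it:S-ecq})$: upgrading an arbitrary entanglement-breaking factor $S$ with $T = T\circ S$ to an \emph{essentially} classical-quantum one. This is exactly where the fixed-point analysis of Section \ref{sec:fixed-points} is designed to help. From $T = T\circ S$ one deduces that $T$ is constant along the orbits of $S$, so $T = T \circ S_\infty$ where $S_\infty$ is the projection onto the fixed-point set $\mathcal F_S$. By Theorem \ref{thm:fixed-point-eb}, since $S$ is entanglement breaking, $S_\infty$ is itself essentially classical-quantum, with POVM operators of unit norm; taking this $S_\infty$ as the new factor gives the desired representation, and composing its essentially-CQ structure with $T$ then yields $(\ref{it:S-ecq}) \Rightarrow (\ref{it:cq})$ by checking that $T\circ S_\infty$ has the required POVM form. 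The delicate point is verifying that $T\circ S_\infty$ genuinely retains the unit-norm condition on its own POVM operators and that no spurious multiplicity ($d_i \geq 2$) survives — which is precisely what Lemma \ref{lem:no-algebra} rules out for the entanglement-breaking $S_\infty$.
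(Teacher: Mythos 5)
Your proposal is correct and follows essentially the same route as the paper: the identical cycle of implications, the Choi--Jamio{\l}kowski argument on the maximally entangled state for \ref{it:sai}$\Rightarrow$\ref{it:S-eb}, the Ces\`aro limit $S_\infty$ together with Theorem \ref{thm:fixed-point-eb} for \ref{it:S-eb}$\Rightarrow$\ref{it:S-ecq}, and the separable input $\sum_i e_ie_i^*\otimes \mathrm{Tr}_1[(M_i\otimes\mathbbm{1})\psi\psi^*]$ for \ref{it:cq}$\Rightarrow$\ref{it:usa}. Two small slips worth fixing: the channel $S$ in \ref{it:sai}$\Rightarrow$\ref{it:S-eb} is the one whose Choi matrix is the separable state $\rho_{sep}$ (not ``obtained from the Choi matrix of $T$''), and the implication you list as immediate should be \ref{it:S-eb}$\Leftarrow$\ref{it:S-ecq} only as a sanity check --- the cycle needs \ref{it:S-eb}$\Rightarrow$\ref{it:S-ecq}, which you do correctly treat as the hard step.
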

\begin{proof}
We shall prove the series of implications \ref{it:usa} $\implies$ \ref{it:sai} $\implies$ \ref{it:S-eb} $\implies$ \ref{it:S-ecq} $\implies$ \ref{it:cq} $\implies$ \ref{it:usa}.

The implication \ref{it:usa} $\implies$ \ref{it:sai} is trivial. 
For \ref{it:sai} $\implies$ \ref{it:S-eb}, choose $\psi$ to be the maximally entangled state in $\mathbb C^d \otimes \mathbb C^d$. Then, using the hypothesis, there exists a separable state $\rho_{sep}$ such that 
$$[T \otimes \mathrm{id}](\psi \psi^*) = [T \otimes \mathrm{id}](\rho_{sep}).$$
Moreover, $\rho_{sep}$ is such that $\mathrm{Tr}_{1} \rho_{sep} = \mathbbm{1}_d$. Hence, thanks to the Choi-Jamio{\l}kowski isomorphism \cite{cho, jam}, there exists an entanglement breaking channel $S$ such that 
$$\rho_{sep} = [S \otimes \mathrm{id}](\psi \psi^*).$$
We have thus 
$$[T \otimes \mathrm{id}](\psi \psi^*) = [T \circ S \otimes \mathrm{id}](\psi \psi^*),$$
so that, using again the Choi-Jamio{\l}kowski isomorphism, $T = T \circ S$, with $S$ entanglement breaking. 

Let us now show \ref{it:S-eb} $\implies$ \ref{it:S-ecq}. Starting from $T = T \circ S$, we get, by recurrence, $T = T \circ S^{n}$ for all $n\geq 1$ and thus $T = T \circ S_\infty$. But $S$ is entanglement breaking, so, using Theorem \ref{thm:fixed-point-eb}, we have that $S_\infty$ is essentially classical-quantum.

For \ref{it:S-ecq} $\implies$ \ref{it:cq}, given an essentially CQ channel $S(\rho) =  \sum_{i=1}^k  \mathrm{Tr}(M_i \rho) \sigma_i$, note that
$$T(\rho) = [T \circ S](\rho) =  \sum_{i=1}^k  \mathrm{Tr} \left[ M_i\rho \right] T(\sigma_i),$$
which shows that $T$ is essentially CQ itself.

Finally, for  \ref{it:cq} $\implies$ \ref{it:usa}, consider an essentially CQ channel where we separate the unit-norm part from the operators $M_i$ and write
$$T(\rho) = \sum_{i=1}^k  \mathrm{Tr} \left[ (e_ie_i^* + \tilde M_i )\rho \right] \sigma_i.$$
Using \eqref{eq:orth-tildeMi-ej}, we get $T = T \circ S$, where $S$ is the essentially CQ (and thus entanglement breaking) channel
$$S(\rho) = \sum_{i=1}^k   \mathrm{Tr} \left[ (e_ie_i^* + \tilde M_i )\rho \right] e_ie_i^*.$$
For any channel $T_2$, we write
$$[T \otimes T_2](\psi \psi^*) = [T \otimes T_2] \left( [S \otimes \mathrm{id}](\psi \psi^*) \right) =  [T \otimes T_2] (\rho_{sep}),$$
with the separable input
$$\rho_{sep} = [S \otimes \mathrm{id}](\psi \psi^*).$$
\end{proof}

\section{On entanglement breaking channels with polytopic image}
\label{sec:polytopic-EB}

The main results of sections \ref{sec:polytopic-image} and \ref{sec:uia-channels} are, respectively, a characterization of quantum channels with polytopic image and identifying the set of image additive channels, as a subclass of entanglement-breaking channels. It turns out that essentially classical-quantum channels, which exhibit a strong form of additivity, have polytopic image. It is thus natural to ask whether these channels are precisely the entanglement-breaking channels having polytopic image. In this section, we construct two examples of entanglement breaking quantum channels having polytopic image, which are not essentially classical quantum. 
Before we introduce our examples, we need some preparatory lemmas. 
\begin{lemma}\label{lem:eCQ-ext-hull}
Consider an essentially classical-quantum channel $T:M_d(\mathbb C) \to M_n(\mathbb C)$ given by
$$T(\rho) = \sum_{i=1}^k \operatorname{Tr}(M_i \rho) \sigma_i,$$
with $\|M_i\|=1$, for all $i$. Since the image of $T$ is the convex hull of the quantum states $\sigma_i$, one can find a subset of these states, say the first $r$ of them, that are the extreme points of the image.
Then, there exist POVM operators $N_1, \ldots, N_r$, with $\|N_i\|=1$ for all $i$, such that
$$T(\rho) = \sum_{i=1}^r \operatorname{Tr}(N_i \rho) \sigma_i.$$
\end{lemma}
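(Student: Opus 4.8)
The plan is to redistribute the weights carried by the non-extreme states $\sigma_{r+1}, \ldots, \sigma_k$ onto the extreme states $\sigma_1, \ldots, \sigma_r$, absorbing the associated POVM operators into new ones. First I would use the fact, recalled in the excerpt, that $\mathrm{Im}(T) = \operatorname{hull}(\sigma_1, \ldots, \sigma_k)$, a polytope whose extreme points are by hypothesis exactly $\sigma_1, \ldots, \sigma_r$. Every point of a polytope is a convex combination of its extreme points, so for each $j \in \{r+1, \ldots, k\}$ I can fix coefficients $\lambda_{ij} \geq 0$ with $\sum_{i=1}^r \lambda_{ij} = 1$ and
$$\sigma_j = \sum_{i=1}^r \lambda_{ij}\, \sigma_i.$$

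Next I would substitute these expressions into the given representation of $T$ and collect the terms multiplying each fixed $\sigma_i$. Setting
$$N_i := M_i + \sum_{j=r+1}^k \lambda_{ij} M_j, \qquad i = 1, \ldots, r,$$
a one-line computation gives $T(\rho) = \sum_{i=1}^r \operatorname{Tr}(N_i \rho)\, \sigma_i$. Each $N_i$ is positive semidefinite, being a non-negative combination of positive operators, and the normalization $\sum_{i=1}^r N_i = \mathbbm{1}$ follows by interchanging the sums and using $\sum_{i=1}^r \lambda_{ij} = 1$ together with the original POVM condition $\sum_{i=1}^k M_i = \mathbbm{1}$. Thus $(N_1, \ldots, N_r)$ is again a POVM.

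Finally, for the norm equalities I would sandwich each $N_i$ between $M_i$ and the identity. On one side, $N_i - M_i = \sum_{j=r+1}^k \lambda_{ij} M_j \geq 0$, so $N_i \geq M_i \geq 0$ and hence $\|N_i\| \geq \|M_i\| = 1$; on the other side, the POVM condition forces $N_i \leq \mathbbm{1}$, so $\|N_i\| \leq 1$. Together these give $\|N_i\| = 1$ for every $i$. I do not anticipate a genuine obstacle here: the only step needing a word of justification is the inequality $\|N_i\| \geq \|M_i\|$, which is the elementary monotonicity $A \geq B \geq 0 \Rightarrow \|A\| \geq \|B\|$ of the operator norm on positive operators. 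The real content of the argument is simply the weight-redistribution bookkeeping carried out in the second step, with the crucial observation that $\|M_i\| = 1$ (assumed) and $N_i \leq \mathbbm{1}$ (automatic) pin the new norms down exactly.
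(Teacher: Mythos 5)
Your proof is correct and follows exactly the paper's argument: decompose each non-extreme $\sigma_j$ as a convex combination of the extreme $\sigma_1,\ldots,\sigma_r$, absorb the corresponding effect operators into the new $N_i$, and deduce $\|N_i\|=1$ from $\mathbbm{1} \geq N_i \geq M_i$. You have merely written out the bookkeeping that the paper leaves implicit.
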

\begin{proof}
The result follows easily by decomposing each quantum state $\sigma_j$ that is not an extreme point of the image as a convex combination of those that are extreme points, and collecting the corresponding effect operators $M_j$. The unit norm property is a consequence of the fact that, for all $i \leq r$, $N_i \geq M_i$.
\end{proof}

\begin{lemma}\label{lem:EB-direct-sum}
Let $T:M_{d_1+d_2}(\mathbb C) \to M_n(\mathbb C)$ be a quantum channel defined by $T(\rho) = T_1(\rho_1) + T_2(\rho_2)$, where $T_i:M_{d_i}(\mathbb C) \to M_n(\mathbb C)$ are quantum channels and $\rho_i \in M_{d_i}(\mathbb C)$  is the upper-left (resp. lower-right) diagonal block of $\rho$:
$$\rho=\begin{bmatrix}
\rho_1 & \tau \\
\tau^* & \rho_2
\end{bmatrix}.$$
Then, the channel $T$ is entanglement breaking if and only if both channels $T_{1,2}$ are entanglement breaking.
\end{lemma}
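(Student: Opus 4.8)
The plan is to prove both directions of the equivalence. The ``only if'' direction is the easy one. If $T$ is entanglement breaking, then for each $i$ the channel $T_i$ is obtained by restricting $T$ to the block $\mathrm{End}(\mathbb{C}^{d_i})$, i.e. $T_i = T \circ \iota_i$ where $\iota_i$ is the (completely positive, trace-preserving on states supported in the block) embedding of $M_{d_i}(\mathbb{C})$ into the corresponding diagonal block of $M_{d_1+d_2}(\mathbb{C})$. Since $\iota_i$ is an isometric embedding onto a block, it sends separable states to separable states, and entanglement breaking is preserved under composition on the right with any channel; hence each $T_i$ is entanglement breaking. Concretely, I would verify this via the Choi matrix: the Choi matrix of $T_i$ is a principal submatrix (the $\mathrm{End}(\mathbb{C}^{d_i}) \otimes \mathrm{End}(\mathbb{C}^{d_i})$ corner) of the Choi matrix of $T$, and a principal submatrix of a separable (PPT-on-the-relevant-cut, or explicitly separably-decomposable) matrix remains separable.

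For the ``if'' direction, suppose both $T_1$ and $T_2$ are entanglement breaking, so each admits a representation $T_j(\rho) = \sum_{a} \mathrm{Tr}(M^{(j)}_a \rho)\, \sigma^{(j)}_a$ with $(M^{(j)}_a)_a$ a POVM on $\mathbb{C}^{d_j}$ and $\sigma^{(j)}_a \in \Sc_n$. The natural idea is to produce an entanglement-breaking representation of $T$ itself in the general form \eqref{eq:genform}. Recalling that $T(\rho) = T_1(\rho_1) + T_2(\rho_2)$ depends only on the two diagonal blocks $\rho_1 = P_1 \rho P_1$ and $\rho_2 = P_2 \rho P_2$ (where $P_j$ is the orthogonal projection onto $\mathbb{C}^{d_j}$), I can write $\mathrm{Tr}(M^{(1)}_a \rho_1) = \mathrm{Tr}\big((P_1 M^{(1)}_a P_1)\,\rho\big)$ and similarly for block $2$. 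Setting $\widehat M^{(1)}_a := P_1 M^{(1)}_a P_1$ and $\widehat M^{(2)}_b := P_2 M^{(2)}_b P_2$ as operators on $\mathbb{C}^{d_1+d_2}$, these are all positive semidefinite and their total sum is $P_1 + P_2 = \mathbbm{1}_{d_1+d_2}$, so the combined family $\{\widehat M^{(1)}_a\} \cup \{\widehat M^{(2)}_b\}$ is a POVM. This yields
$$T(\rho) = \sum_a \mathrm{Tr}\big(\widehat M^{(1)}_a \rho\big)\, \sigma^{(1)}_a + \sum_b \mathrm{Tr}\big(\widehat M^{(2)}_b \rho\big)\, \sigma^{(2)}_b,$$
which is exactly an entanglement-breaking representation of $T$ (POVM operators paired with density matrices). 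This establishes that $T$ is entanglement breaking.

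The main subtlety to get right is that the ``only if'' direction genuinely requires the restriction to land on a \emph{diagonal} block and not pick up off-diagonal coherences: the decomposition $T(\rho) = T_1(\rho_1) + T_2(\rho_2)$ already discards the off-diagonal part $\tau$, so $T$ carries no information about cross-block coherence, which is precisely what makes the block structure clean. I expect the only real care needed is in the ``only if'' argument, ensuring that restricting an EB channel to a coordinate subalgebra indeed yields an EB channel; the cleanest route is the Choi-submatrix observation above, since separability of the full Choi matrix (the defining property of EB via Choi--Jamio{\l}kowski) descends to any principal submatrix obtained by restricting both the input and the ancilla to a common block. The ``if'' direction, by contrast, is a direct and essentially computational construction with no obstacle.
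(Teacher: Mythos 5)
Your proposal is correct, but it takes a genuinely different route from the paper. The paper's proof is a two-line argument at the level of images under $T\otimes\id$: it observes that $\im(T\otimes\id)=\bigcup_{\lambda\in[0,1]}\left[\lambda\,\im(T_1\otimes\id)+(1-\lambda)\,\im(T_2\otimes\id)\right]$ and concludes that this set consists of separable states if and only if both constituent images do, thereby using the characterization of entanglement breaking via separability of the output of $T\otimes\id$. You instead work with Holevo representations: in the ``if'' direction you pad the two block POVMs $\{M^{(1)}_a\}$, $\{M^{(2)}_b\}$ by zeros so that their union sums to $P_1+P_2=\mathbbm{1}_{d_1+d_2}$, which directly exhibits a representation of the form \eqref{eq:genform} with a POVM and density matrices --- i.e., exactly the paper's definition of an entanglement breaking channel, with no appeal to the Horodecki--Shor--Ruskai equivalence; in the ``only if'' direction you write $T_i=T\circ\iota_i$ with $\iota_i$ the block embedding and use stability of the EB property under precomposition (or, equivalently, compression of the Choi matrix). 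Both arguments are valid; the paper's is shorter and stays in the language used elsewhere (e.g., in Lemma \ref{lem:no-algebra}), while yours is more explicit and, in one direction, works straight from the stated definition. One small imprecision worth fixing: the relevant piece of the Choi matrix is the compression $(\mathbbm{1}_n\otimes P_i)\,C_T\,(\mathbbm{1}_n\otimes P_i)$ by a projection acting on the reference factor alone, not a corner of the form $\mathrm{End}(\mathbb C^{d_i})\otimes\mathrm{End}(\mathbb C^{d_i})$; separability is indeed preserved under such local compressions, so the argument goes through, but the phrase ``principal submatrix of a separable matrix remains separable'' is only true for compressions respecting the tensor factorization.
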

\begin{proof}
By construction, the image of $T \otimes \id$ has the form: 
\be
\im (T\otimes \id) = \bigcup_{0\leq\lambda\leq1} \left[ \lambda \im(T_1 \otimes \id) + (1-\lambda) \im(T_2 \otimes \id) \right]
\ee
Then, obviously $\im(T \otimes \id)$ is separable if and only if $\im(T_1 \otimes \id)$ and $\im(T_2 \otimes \id)$ are both separable. 
\end{proof}

To introduce our first example, consider unital qubit channels that, in the Bloch ball picture, have a disc of radius $r$ as image. 
Then, the conditions for complete positivity \eqref{eq:FA-1} and \eqref{eq:FA-2} amount to $r \leq 1/2$, so the only discs centered at the origin which can be images of unital qubit quantum channels are the ones with radius smaller than one half. 

\begin{proposition}\label{prop:disc-image}
Let $T_2 : M_2(\mathbb C) \to M_2(\mathbb C)$ be a unital quantum channel that, in the Bloch ball representation, has a disc $D$ of radius $1/2$ as its image. Consider, for some integer $d$, a classical quantum channel $T_1:M_d(\mathbb C) \to M_2(\mathbb C)$ given by
$$T_1(\rho_1) = \sum_{i=1}^d \langle e_i, \rho_1 e_i \rangle \, \sigma_i,$$
where $\{e_i\}_{i=1}^d$ is an orthonormal basis of $\mathbb C^d$ and $\{\sigma_i\}_{i=1}^d$ is a set of \emph{mixed} quantum states forming the vertices of a convex polytope such that
$$D \subseteq \operatorname{hull}\{\sigma_i\}_{i=1}^d.$$
Define a quantum channel $T:M_{d+2}(\mathbb C) \to M_2(\mathbb C)$ by $T(\rho) = T_1(\rho_1) + T_2(\rho_2)$, where $\rho_{1,2}$ are the upper (resp. lower) diagonal blocks of $\rho$. Then, the channel $T$ has a polytopic image and it is entanglement breaking, but not essentially classical-quantum. 
\end{proposition}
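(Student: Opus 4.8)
The plan is to verify the three claims — polytopic image, entanglement breaking, and failure of the essentially classical-quantum property — largely independently, with the last one carrying all the weight. For the polytopic image I would argue directly from the block structure: for $\rho\in\Sc_{d+2}$ with diagonal blocks $\rho_{1,2}$, setting $\lambda=\operatorname{Tr}(\rho_1)$ gives $T(\rho)=\lambda\,T_1(\rho_1/\lambda)+(1-\lambda)T_2(\rho_2/(1-\lambda))$, so that
\[
\mathrm{Im}(T)=\bigcup_{\lambda\in[0,1]}\bigl[\lambda\,\mathrm{Im}(T_1)+(1-\lambda)\mathrm{Im}(T_2)\bigr].
\]
Since $\mathrm{Im}(T_1)=\operatorname{hull}\{\sigma_i\}$ is convex and contains $\mathrm{Im}(T_2)=D$, each set in the union lies in $\mathrm{Im}(T_1)$, while $\lambda=1$ recovers it; hence $\mathrm{Im}(T)=\operatorname{hull}\{\sigma_i\}$ is a polytope with vertices $\{\sigma_i\}$. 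For entanglement breaking, Lemma \ref{lem:EB-direct-sum} reduces the claim to $T_1$ and $T_2$ both being entanglement breaking: $T_1$ is classical-quantum, and $T_2$ is a unital qubit channel whose Bloch matrix has singular values $\{1/2,1/2,0\}$ summing to one, hence entanglement breaking (one may also exhibit an explicit trine measure-and-prepare decomposition).

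The substantive part is that $T$ is not essentially classical-quantum, which I would prove by contradiction. Assuming $T$ is essentially CQ, Lemma \ref{lem:eCQ-ext-hull} lets me rewrite it using only the extreme points of $\mathrm{Im}(T)$, which are precisely the \emph{mixed} vertices $\sigma_i$, so that $T(\rho)=\sum_{i=1}^d\operatorname{Tr}(N_i\rho)\sigma_i$ for some POVM $(N_i)$. Evaluating at inputs supported on the second diagonal block, $0\oplus\rho_2$, and setting $\hat N_i=P_2N_iP_2$, I obtain a \emph{qubit} POVM $(\hat N_i)$ with
\[
T_2(\rho_2)=\sum_{i=1}^d\operatorname{Tr}(\hat N_i\rho_2)\,\sigma_i .
\]
In other words, the disc channel $T_2$ would be realized as a measure-and-prepare channel whose prepared states are the fixed mixed vertices $\sigma_i$, and the task is to refute the existence of such a representation.

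To reach the contradiction I pass to the Bloch picture. Writing $\sigma_i=\tfrac12(\mathbbm 1+\vec s_i\cdot\vec\sigma)$ with $|\vec s_i|<1$ (mixedness), $\hat N_i=\hat\alpha_i\mathbbm 1+\vec n_i\cdot\vec\sigma$ with $\hat\alpha_i\ge|\vec n_i|$ (positivity) and $\sum_i\hat\alpha_i=1$ (normalization), and letting $\Lambda$ be the Bloch matrix of the unital channel $T_2$, matching the linear parts of the identity above yields $\sum_i\vec s_i\,\vec n_i^{\,T}=\Lambda$. Because $\mathrm{Im}(T_2)$ is a disc of radius \emph{exactly} $1/2$, the singular values of $\Lambda$ are $\{1/2,1/2,0\}$, so its nuclear norm equals $1$; picking an orthogonal $B$ with $\operatorname{Tr}(B^{T}\Lambda)=\|\Lambda\|_1=1$ and contracting the matrix equation against $B$ gives
\[
1=\operatorname{Tr}(B^{T}\Lambda)=\sum_i\langle B\vec n_i,\vec s_i\rangle\le\sum_i|\vec n_i|\,|\vec s_i|<\sum_i\hat\alpha_i=1,
\]
a contradiction (the strict step uses $|\vec s_i|<1$ together with $\Lambda\neq 0$). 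Equivalently, one may first conjugate by unitaries on input and output to assume $\Lambda=\operatorname{diag}(1/2,1/2,0)$, so that the relevant quantity becomes simply $\operatorname{Tr}(\Lambda)=1$.

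The main obstacle is locating this final inequality and seeing exactly where the two quantitative hypotheses enter: the \emph{strict} inequality $|\vec s_i|<1$ comes from the $\sigma_i$ being mixed, while $\|\Lambda\|_1=1$ comes from the disc having the maximal radius $1/2$. Both are essential — for pure prepared states the inequality is only weak and is saturated by the standard trine decomposition of $T_2$, so the argument correctly breaks down precisely in the regime where $T$ \emph{can} be essentially classical-quantum. A point to handle with care in the second paragraph is that Lemma \ref{lem:eCQ-ext-hull} forces the prepared states to be the mixed vertices rather than arbitrary (possibly pure) states; without this reduction the concluding inequality would fail to be strict.
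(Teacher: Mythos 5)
Your proposal is correct, and while the first two claims (polytopic image via the inclusion $\mathrm{Im}(T_2)\subseteq\mathrm{Im}(T_1)$, and entanglement breaking via Lemma \ref{lem:EB-direct-sum} together with Ruskai's criterion $\sum_i|\lambda_i|\le 1$) follow the paper exactly, your argument for the essential step --- that $T$ is not essentially classical-quantum --- takes a genuinely different route after the common reduction through Lemma \ref{lem:eCQ-ext-hull} to a qubit measure-and-prepare representation $T_2(\rho_2)=\sum_i\operatorname{Tr}(\hat N_i\rho_2)\sigma_i$. The paper exploits that the radius-$1/2$ disc saturates the Fujiwara--Algoet bound: since the $\sigma_i$ are mixed, one can shrink them toward the boundary, $\sigma_i'=(1+\varepsilon)\sigma_i-\varepsilon\mathbbm{1}/2\ge 0$, and rewrite $T_2$ as a convex mixture of the depolarizing channel and a putative channel $T_2'$ whose image would be a disc of radius $(1+\varepsilon)/2>1/2$, contradicting \eqref{eq:FA-1}--\eqref{eq:FA-2}. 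You instead work directly in the Bloch picture, extract the identity $\sum_i\vec s_i\vec n_i^{\,T}=\Lambda$, and contract against the orthogonal maximizer of the nuclear-norm duality to get $1=\|\Lambda\|_1\le\sum_i|\vec n_i||\vec s_i|<\sum_i\hat\alpha_i=1$. Both arguments hinge on the same two quantitative inputs (strict mixedness $|\vec s_i|<1$ and maximality of the radius, i.e.\ $\|\Lambda\|_1=1$), and you correctly flag where strictness comes from ($\Lambda\neq 0$ forces some $\vec n_i\neq 0$). The paper's $\varepsilon$-perturbation has the advantage of recycling verbatim for the spherical example of Proposition \ref{prop:sphere-image}; your duality inequality is more self-contained (it does not invoke complete positivity of the auxiliary map $T_2'$) and makes the role of each hypothesis more transparent, but it is tailored to the planar case and would need the analogous bound with $\|\Lambda\|_1=1$ replaced by the entanglement-breaking threshold to cover the sphere.
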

\begin{proof}
The fact that the image of $T$ is a polytope comes from the inclusion of the images of the channels $T_{1,2}$. Moreover, it follows from \cite[Theorem 4]{rus} that the channel $T_2$ is entanglement breaking  and thus $T$ is also entanglement breaking, by Lemma \ref{lem:EB-direct-sum}. 

We show now, by reductio ad absurdum, that the channel $T$ is not essentially classical-quantum. Suppose $T$ were an essentially classical-quantum channel; since the image of $T$ is a convex polytope having as extreme points precisely the (mixed) quantum states $\{\sigma_i\}_{i=1}^d$, it follows by Lemma \ref{lem:eCQ-ext-hull} that there exists a POVM $(M_i)_{i=1}^d$ consisting of operators of norm one, such that
$$T(\rho) = \sum_{i=1}^d \operatorname{Tr}(M_i \rho) \sigma_i.$$
Restricting to the lower-right $2 \times 2$ corner of $M_{d+2}(\mathbb C)$, we can write
$$T_2(\rho_2) = \sum_{i=1}^d \operatorname{Tr}(N_i \rho_2) \sigma_i,$$
for any qubit density matrix $\rho_2$.
Here $N_i$ is the lower-right $2\times 2$ block of $M_i$. 
Note that the $N_i$'s again form a qubit POVM but the norms may be smaller than one.

Since all the states $\sigma_i$ are mixed (and thus have full rank), one can find a positive number $\varepsilon$ such that
$$\forall 1 \leq i \leq d, \qquad \sigma_i':= (1 + \varepsilon) \sigma_i - \varepsilon \frac{\mathbbm{1}} 2 \geq 0.$$
Writing the channel $T_2$ in terms of the new matrices $\sigma_i'$, we obtain
$$T_2(\rho_2) = \frac{1}{1+\varepsilon}  \sum_{i=1}^d \operatorname{Tr}(N_i \rho_2) \sigma_i' + \frac{\varepsilon}{1+\varepsilon} \frac{{\mathbbm{1}}}{2} = \frac{1}{1+\varepsilon} T_2'(\rho_2) + \frac{\varepsilon}{1+\varepsilon} \Delta(\rho_2),$$
where $\Delta$ denotes the totally depolarizing channel, and 
$$T_2'(\rho_2) := \sum_{i=1}^d \operatorname{Tr}(N_i \rho_2) \sigma_i'$$
is again a quantum channel. Since the image of the channel $T_2$ is a disc of radius one half, the image of $T_2'$ is a disc of radius $(1+\varepsilon)/2$, which contradicts the Fujiwara-Algoet conditions \eqref{eq:FA-1}-\eqref{eq:FA-2}, completing the proof.
\end{proof}

Let us also consider a three dimensional version of the previous example, by looking  at  qubit depolarizing channels
$$\Delta_r(\rho) = r \rho + (1-r) \frac {\mathbbm{1}} 2.$$
Such a channel has a sphere of radius $r$ as an image and it is thus entanglement breaking if and only if $r \leq 1/3$ (see \cite[Theorem 4]{rus}).

\begin{proposition}\label{prop:sphere-image}
Let $T_2 = \Delta_{r} : M_2(\mathbb C) \to M_2(\mathbb C)$ be a qubit depolarizing channel with parameter $r=1/3$; the image of $T_2$ is a sphere $S$ of radius $1/3$. Consider, for some integer $d\in\mathbb{N}$, a classical quantum channel $T_1:M_d(\mathbb C) \to M_2(\mathbb C)$ given by
$$T_1(\rho_1) = \sum_{i=1}^d \langle e_i, \rho_1 e_i \rangle \, \sigma_i,$$
where $\{e_i\}_{i=1}^d$ is an orthonormal basis of $\mathbb C^d$ and $\{\sigma_i\}_{i=1}^d$ is a set of \emph{mixed} quantum states that are the vertices of a convex polytope such that
$$S \subseteq \operatorname{hull}\{\sigma_i\}_{i=1}^d.$$
Define a quantum channel $T:M_{d+2}(\mathbb C) \to M_2(\mathbb C)$ by $T(\rho) = T_1(\rho_1) + T_2(\rho_2)$, where $\rho_{1,2}$ are the upper (resp. lower) diagonal blocks of $\rho$. Then, the channel $T$ has a polytopic image and it is entanglement breaking, but not essentially classical-quantum. 
\end{proposition}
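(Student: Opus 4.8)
The plan is to follow the proof of Proposition \ref{prop:disc-image} as closely as possible, since the two statements are strictly parallel, and then to isolate the single place where the three-dimensional geometry forces a different concluding argument.

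The polytopic image and entanglement breaking claims are settled exactly as in the disc case. The image of $T$ equals $\bigcup_{\lambda \in [0,1]} (1-\lambda)\mathrm{Im}(T_1) + \lambda\,\mathrm{Im}(T_2)$, and since $\mathrm{Im}(T_2) = S \subseteq \mathrm{hull}\{\sigma_i\} = \mathrm{Im}(T_1)$, this union collapses to the polytope $\mathrm{Im}(T_1)$. For entanglement breaking, $T_1$ is classical-quantum (hence entanglement breaking), and $T_2 = \Delta_{1/3}$ is entanglement breaking by \cite[Theorem 4]{rus} because $r=1/3$; Lemma \ref{lem:EB-direct-sum} then yields that $T$ is entanglement breaking.

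For the assertion that $T$ is not essentially classical-quantum, I would argue by contradiction and reproduce the opening moves of the disc proof. Assuming $T$ were essentially classical-quantum, Lemma \ref{lem:eCQ-ext-hull}, applied to the polytope $\mathrm{Im}(T)$ whose extreme points are the mixed states $\sigma_i$, produces a norm-one POVM $(M_i)$ with $T(\rho)=\sum_i \mathrm{Tr}(M_i\rho)\sigma_i$. Restricting to the lower-right qubit block yields a qubit POVM $(N_i)$ with $T_2(\rho_2)=\sum_i \mathrm{Tr}(N_i\rho_2)\sigma_i$. Since every $\sigma_i$ has full rank, I pick $\varepsilon>0$ with $\sigma_i':=(1+\varepsilon)\sigma_i-\varepsilon\mathbbm{1}/2\geq 0$ (these are again density matrices, as their trace is $1$), and rewrite
$$T_2 = \frac{1}{1+\varepsilon}\,T_2' + \frac{\varepsilon}{1+\varepsilon}\,\Delta, \qquad T_2'(\rho_2) := \sum_i \mathrm{Tr}(N_i\rho_2)\,\sigma_i'.$$
By its very form, $T_2'$ is an entanglement-breaking qubit channel.

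The main obstacle, and the one genuine divergence from the disc proof, is the final contradiction. In the disc case one concludes by observing that $T_2'$ has a disc image of radius $(1+\varepsilon)/2>1/2$, violating complete positivity through the Fujiwara-Algoet conditions. That mechanism is \emph{unavailable} here: unwinding the Bloch action of the identity $T_2=\tfrac{1}{1+\varepsilon}T_2'+\tfrac{\varepsilon}{1+\varepsilon}\Delta$ shows that $T_2'$ scales every Bloch vector by the factor $(1+\varepsilon)/3$, so $T_2'=\Delta_{(1+\varepsilon)/3}$ is a depolarizing channel whose sphere image has radius $(1+\varepsilon)/3<1$, and is therefore perfectly completely positive. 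Instead I would invoke the entanglement-breaking threshold for depolarizing channels: by \cite[Theorem 4]{rus}, $\Delta_{r'}$ is entanglement breaking if and only if $r'\leq 1/3$, whereas $(1+\varepsilon)/3>1/3$. This contradicts the fact that $T_2'$ is entanglement breaking by construction, completing the proof. The only computation worth verifying is that $T_2'$ indeed equals $\Delta_{(1+\varepsilon)/3}$, i.e.\ that subtracting the $\varepsilon\Delta$ component uniformly dilates the Bloch ball by $(1+\varepsilon)$, which is immediate from the linearity of the Bloch representation.
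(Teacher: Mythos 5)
Your proposal is correct and follows essentially the same route as the paper: the same reduction via Lemma \ref{lem:eCQ-ext-hull} and the block restriction, the same shift to the states $\sigma_i'$, the identification $T_2'=\Delta_{(1+\varepsilon)/3}$, and the same final contradiction with the entanglement-breaking threshold $r\leq 1/3$ from \cite[Theorem 4]{rus} (the paper likewise abandons the Fujiwara--Algoet argument here). Your explicit remark that $T_2'$ is entanglement breaking by its Holevo form, and your observation of exactly where the disc argument fails, are both accurate and match the paper's reasoning.
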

\begin{proof}
As in the previous proposition, the polytopic image and the entanglement breaking properties follow directly from the construction. Let us now show, in the same manner as before, that the channel $T$ is not essentially classical-quantum. Suppose, for a contradiction, that $T$ were an essentially classical-quantum channel. Using the same reasoning as before, one can find density matrices $\{\sigma_i'\}$ and a POVM $(N_i)$ such that
$$T_2(\rho_2) = \frac{1}{1+\varepsilon}  \sum_{i=1}^d \operatorname{Tr}(N_i \rho_2) \sigma_i' + \frac{\varepsilon}{1+\varepsilon} \frac{{\mathbbm{1}}}{2} = \frac{1}{1+\varepsilon} T_2'(\rho_2) + \frac{\varepsilon}{1+\varepsilon} \Delta(\rho_2),$$
where $\Delta$ denotes the totally depolarizing channel, and 
$$T_2'(\rho_2) := \sum_{i=1}^d \operatorname{Tr}(N_i \rho_2) \sigma_i'$$
is an entanglement breaking quantum channel. Recall now that $T_2$ is the depolarizing channel of parameter $r=1/3$ and thus
$$T_2'(\rho_2) =  \frac{1+\varepsilon}{3} \rho_2 + \frac{2-\varepsilon}{3} \frac {\mathbbm{1}} 2 = \Delta_{(1+\varepsilon)/3}(\rho_2),$$
which is a contradiction, since the qubit depolarizing channel of parameter $r=(1+\varepsilon)/3$ is not entanglement breaking. 
\end{proof}

Let us finally summarize the picture for quantum channels having polytopic images of small dimensions: 

\begin{proposition}
Consider quantum channels having a polytope image of dimension $r \geq 0$. Then, if $r=0,1$, the channel has to be classical-quantum and thus entanglement-breaking. If $r=2,3$, there exist examples of entanglement breaking channels with polytopic image that are not essentially classical-quantum. 
\end{proposition}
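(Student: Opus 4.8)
The plan is to treat the two regimes separately, since the statement is really two independent assertions: an \emph{exhaustive} classification for $r \in \{0,1\}$ and an \emph{existence} claim for $r \in \{2,3\}$. The latter will follow immediately from the two examples already built in this section, so the only genuinely new work is the low-dimensional classification, which I would prove directly rather than routing it through the full decomposition of Theorem \ref{thm:polytopic-image} (applying that theorem to an $r=1$ image only reduces the problem to the same one-dimensional question for the residual block $T_2$, so a direct argument is cleaner).

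For $r=0$ the image is a single density matrix $\sigma$, whence $T(\rho)=\mathrm{Tr}(\rho)\sigma=\sum_i\langle e_i,\rho e_i\rangle\sigma$ for any orthonormal basis $\{e_i\}$, which is manifestly of classical-quantum form with all output states equal to $\sigma$. For $r=1$ the image is a segment $[\sigma_1,\sigma_2]$, and the key step is to linearise the scalar coordinate along this segment. Writing $T(\rho)=f(\rho)\,\sigma_1+(1-f(\rho))\,\sigma_2$ and pairing with a Hermitian matrix $G$ normalised so that $\langle G,\sigma_1-\sigma_2\rangle=1$ (possible since $\sigma_1\neq\sigma_2$), I would read off $f(\rho)=\mathrm{Tr}(T^*(G)\rho)-\langle G,\sigma_2\rangle$; absorbing the constant into a multiple of $\mathbbm{1}$ (allowed on the trace-one slice) shows $f(\rho)=\mathrm{Tr}(M\rho)$ for a Hermitian $M$, while $f(\rho)\in[0,1]$ on all pure states forces $0\le M\le\mathbbm{1}$. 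This is exactly the segment analysis already carried out inside the proof of Lemma \ref{lem:elli}.

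The classification then closes by diagonalising $M=\sum_i\mu_i e_ie_i^*$, which yields
\be
T(\rho)=\sum_i\langle e_i,\rho e_i\rangle\,\bigl(\mu_i\sigma_1+(1-\mu_i)\sigma_2\bigr).
\ee
Since each $\tau_i:=\mu_i\sigma_1+(1-\mu_i)\sigma_2$ is a convex combination of density matrices, hence itself a density matrix, this is precisely the classical-quantum form; and classical-quantum channels are entanglement breaking, as the POVM $\{e_ie_i^*\}$ together with the states $\tau_i$ realise the representation of Eq.(\ref{eq:genform}). I regard this diagonalisation as the conceptual heart of the $r\le 1$ case: the point is that a one-dimensional image offers no room for a non-classical ``hidden'' block, and the mixedness of the $\tau_i$ is harmless.

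For $r=2$ and $r=3$ I would simply invoke Propositions \ref{prop:disc-image} and \ref{prop:sphere-image}, which already produce entanglement breaking channels with polytopic image that are not essentially classical-quantum. The main point requiring care here is the dimension bookkeeping: one must check these images are genuinely two- and three-dimensional. For the disc example this means choosing the circumscribing polygon with vertices $\sigma_i$ lying in the two-dimensional affine plane spanned by the disc $D$, so that $\mathrm{Im}(T)=\mathrm{hull}\{\sigma_i\}$ has $\dim=2$; for the ball example one likewise takes the enclosing polytope inside the three-dimensional affine hull of the Bloch ball, giving $\dim=3$. Ensuring this coplanarity (resp.\ cospatiality) is the only place where the otherwise-immediate citation could go wrong, since placing the vertices off the affine hull of the disc or ball would inadvertently raise the dimension of the image above the target value.
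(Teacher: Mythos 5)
Your proposal is correct, and for the $r=1$ case it takes a genuinely different route from the paper. The paper handles $r=1$ by invoking Theorem \ref{thm:polytopic-image}: it decomposes $T=T_1(\rho_1)+T_2(\rho_2)$ with $T_1$ classical-quantum and $\mathrm{Im}(T_2)\subseteq\mathrm{Im}(T_1)$ again a (possibly degenerate) segment, and then iterates this decomposition on $T_2$, arguing termination from the finiteness of the input dimension; the resulting direct sum of CQ blocks is again CQ. You instead prove the $r=1$ classification in one shot: the barycentric coordinate $f$ along the segment is affine, hence of the form $\mathrm{Tr}(M\rho)$ with $0\le M\le\mathbbm{1}$, and diagonalising $M$ exhibits $T$ in CQ form with output states $\tau_i=\mu_i\sigma_1+(1-\mu_i)\sigma_2$. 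This is essentially the global version of the segment analysis inside Lemma \ref{lem:elli}, and it buys a self-contained, non-recursive argument that avoids the paper's somewhat informal ``repeat until it terminates'' step — exactly the reduction you correctly identify as circular-feeling, since applying the theorem to an $r=1$ image just reproduces the same question for $T_2$. The paper's route, on the other hand, reuses the already-established structure theorem and requires no new computation. For $r=0$ and $r=2,3$ you match the paper (single point, and citation of Propositions \ref{prop:disc-image} and \ref{prop:sphere-image} respectively); your remark that the circumscribing polytope's vertices must be chosen in the affine hull of the disc (resp.\ ball) to pin the image dimension at exactly $2$ (resp.\ $3$) is a legitimate point of care that the paper leaves implicit.
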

\begin{proof}
If $r=0$, then $\operatorname{Im}(T) = \{\sigma\}$ and the channel $T$ is simply
$$T(\rho) = \operatorname{Tr}(\rho) \sigma,$$
which is classical-quantum. 

If $r=1$, use Theorem \ref{thm:polytopic-image} to decompose $T$ as
$$T(\rho) = T_1(\rho_1) + T_2(\rho_2),$$
where $T_1$ is a classical-quantum channel and $T_2$ is any channel such that
$$\operatorname{Im}(T_2) \subseteq \operatorname{Im}(T_1) = \operatorname{Im}(T).$$
Hence the image of $T_2$ is also a segment (possibly degenerated) and one can use repeatedly Theorem \ref{thm:polytopic-image} to decompose this channel into classical-quantum channels. This procedure ends after a finite number of steps, since the dimension of the input space is finite. 

The cases $r=2,3$ are treated in Propositions \ref{prop:disc-image}, \ref{prop:sphere-image}.
\end{proof}

\bigskip

\noindent \textit{Acknowledgements.}
We would like to thank Toby Cubitt for useful discussions. The authors are grateful for financial support via the CHIST-ERA/BMBF project CQC and from the John Templeton Foundation (ID$\sharp$48322). I.N.'s research has been supported by a von Humboldt fellowship and by the ANR projects {OSQPI} {2011 BS01 008 01} and {RMTQIT}  {ANR-12-IS01-0001-01}. M.M.W. acknowledges the Isaac Newton Institute for Mathematical Sciences (Cambridge, UK) where this work has been finalized. 

\appendix 
\section{Bloch ball representation}
\label{sec:Bloch-ball}

We briefly recall the Bloch ball. 
Any qubit state $\rho$ can be written as
\be
\rho = \frac 12 {\mathbbm{1}} + \sum_{i=1}^3 w_i \sigma_i
\ee
where $\{\sigma_i\}_{i=1}^3$ are the Pauli matrices and $\{w_i\}_{i=1}^3 \subset \mathbb R$ are such that $\sum_{i=1}^3 w_i^2 \leq 1$. 
So, the set of quantum qubit states can be identified as the unit ball in $\mathbb R^3$.
Hence, up to rotations, which correspond to unitaries on the Hilbert space, any trace preserving linear map on qubit sates can expressed in terms of two vectors $\lambda,t\in\mathbb{R}^3$ corresponding to compressions and shifts, respectively, w.r.t. the corresponding axis.
Hence, the image is always an ellipsoid such that for $i \in \{1,2,3\}$ the radius along the axis $i$ is $\lambda_i$ 
and the center is $(t_1,t_2,t_3) \in \mathbb R^3$. 
When the channel is unital, $t_i=0$ holds for all $i$ and 
the Fujiwara-Algoet conditions \cite{fal} show that the inequalities:
\begin{align}
\label{eq:FA-1}\lambda_1 + \lambda_2 &\leq 1+\lambda_3 \\
\label{eq:FA-2}\lambda_1 - \lambda_2 &\leq 1-\lambda_3
\end{align}
are equivalent to  complete positivity of the map.

\end{document}